\definecolor{mycitegreen}{RGB}{0,90,0}
\definecolor{myrefblue}{RGB}{90,0,0}
\definecolor{cyred}{RGB}{140,0,0}
\definecolor{commentgreen}{RGB}{0,80,0}
\def\compactify{\itemsep=0pt \topsep=0pt \partopsep=0pt \parsep=0pt}
 \let\latexusecounter=\usecounter
\newif\if@restonecol
\title{Finishing Flows Quickly with Preemptive Scheduling\thanks{The conference version was published in \emph{SIGCOMM'12}.}}
\author{
\alignauthor
Chi-Yao Hong\\
\affaddr{UIUC}\\
\email{cyhong@illinois.edu}
\alignauthor
Matthew Caesar\\
\affaddr{UIUC}\\
\email{caesar@illinois.edu}
\alignauthor
P. Brighten Godfrey\\
\affaddr{UIUC}\\
\email{pbg@illinois.edu}
}
\begin{document}


\newcounter{copyrightbox}

\maketitle

\newcommand{\paragraphb}[1]{\vspace{0.03in}\noindent{\bf #1} }
\newcommand{\paragraphe}[1]{\vspace{0.03in}\noindent{\em #1} }
\newcommand{\paragraphbe}[1]{\vspace{0.03in}\noindent{\bf \em #1} }

\newcommand{\prooffontsize}{\fontsize{8pt}{9.3pt}\selectfont}
\newcommand{\cready}{\textcolor{black}}
\newcommand{\newcomment}{\textcolor{black}}
\newcommand{\cycomment}{\textcolor{black}}
\newcommand{\matt}[1]{{\color{black}{#1}}}
\newcommand{\codecomment}{\textcolor{commentgreen}}
\newcommand{\sys}{PDQ\xspace}
\newcommand{\mpsys}{M-PDQ\xspace}
\newcommand{\cut}[1]{}
\newcommand{\figtitle}[1]{\small\emph{\textbf{#1}}}
\newcommand{\bntl}{D$^3$\xspace}

\newcommand{\objtolbl}{\vspace{-12pt}}
\newcommand{\objtoobj}{\vspace{-12.5pt}}
\newcommand{\algtotext}{\vspace{-8pt}}
\newcommand{\algfontsize}{\fontsize{9.5pt}{10.5pt}\selectfont}
\newcommand{\listskip}{\vspace{-5pt}}
\newcommand{\listskips}{\vspace{-3pt}}
\newcommand{\secendskip}{\vspace{-3pt}}

\newcommand{\fixme}[1]{{\bf\textcolor{red}{[#1]}}}

\newcommand{\helpme}[1]{{\bf\textcolor{red}{#1}}}

\begin{abstract}
\textnormal{Today's data centers face extreme challenges in providing low latency.  
However, fair sharing, a principle commonly adopted in current congestion control protocols, is far 
from optimal for satisfying latency requirements.}

\textnormal{
We propose Preemptive Distributed Quick (\textbf{\sys{}}) flow scheduling, a protocol designed to complete flows quickly and meet flow deadlines.
\sys enables flow preemption to approximate a range of scheduling disciplines.
For example, \sys can emulate a shortest job first algorithm to give priority to the short flows by pausing the contending flows. 
\sys borrows ideas from centralized scheduling disciplines and implements them in a fully distributed manner, making it scalable to 
today's data centers. Further, we develop a multipath version of \sys to exploit path diversity.}

\textnormal{Through extensive packet-level and flow-level simulation, we demonstrate that \sys significantly 
outperforms TCP, RCP and D$^3$ in data center environments. 
We further show that \sys is stable, resilient to packet loss, and preserves nearly all its performance gains even given inaccurate flow information.}
\end{abstract}


\paragraphb{Categories and Subject Descriptors:} C.2.2 [Computer-Communication Networks]: Network Protocols

\vspace{-2pt}
\paragraphb{General Terms:} Algorithms, Design, Performance

\vspace{-10pt}

\section{Introduction}
\label{sec:introduction}

Data centers are now used as the underlying infrastructure of many modern commercial operations, including web services, cloud computing, and some of the world's largest databases and storage services. Data center applications including financial services, social networking, recommendation systems, and web search often have very demanding latency requirements.  For example, even fractions of a second make a quantifiable difference in user experience for web services~\cite{brutlag09}.  And a service that aggregates results from many back-end servers has even more stringent requirements on completion time of the back-end flows, since the service must often wait for the {\em last} of these flows to finish or else reduce the quality of the final results.\footnote{See discussion in \cite{dctcp}, \S2.1.}  Minimizing delays from network congestion, or meeting soft-real-time deadlines with high probability, is therefore important.

Unfortunately, current transport protocols neither minimize flow completion time nor meet deadlines. TCP, RCP~\cite{rcp}, ICTCP~\cite{ictcp}, and DCTCP~\cite{dctcp} approximate \emph{fair sharing}, dividing link bandwidth equally among flows. Fair sharing is known to be far from optimal in terms of minimizing flow completion time~\cite{PShaslargeFCT} and the number of deadline-missing flows~\cite{EDFbetterthanPS}. \cready{As a result, a study of three production data centers~\cite{bntl} showed that a significant fraction ($7 - 25\%$) of flow deadlines were missed}, resulting in degradation of application response quality, waste of network bandwidth, and ultimately loss of operator revenue~\cite{dctcp}.

This paper introduces \textbf{Preemptive Distributed Quick (\sys)} flow scheduling, a protocol designed to complete flows quickly and meet flow deadlines.  \sys builds on traditional real-time scheduling techniques: when processing a queue of tasks, scheduling in order of Earliest Deadline First (EDF) is known to minimize the number of late tasks, while Shortest Job First (SJF) minimizes mean flow completion time.  However, applying these policies to scheduling data center flows introduces several new challenges.

First, EDF and SJF assume a centralized scheduler which knows the global state of the system; this would impede our goal of low latency in a large data center. To perform dynamic decentralized scheduling, \sys provides a distributed algorithm to allow a set of switches to collaboratively gather information about flow workloads and converge to a stable agreement on allocation decisions. Second, unlike ``fair sharing'' protocols, EDF and SJF rely on the ability to {\em preempt} existing tasks, to ensure a newly arriving task with a smaller deadline can be completed before a currently-scheduled task. To support this functionality in distributed environments, \sys provides the ability to perform distributed preemption of existing flow traffic, in a manner that enables fast switchover and is guaranteed to never deadlock.

Thus, \sys provides a distributed flow scheduling layer which is {\em lightweight}, using only FIFO tail-drop queues, and {\em flexible},  in that it can approximate a range of scheduling disciplines based on relative priority of flows.  We use this primitive to implement two scheduling disciplines: EDF to minimize mean flow completion time, and SJF to minimize the number of deadline-missing flows.

Through an extensive simulation study using real datacenter workloads, we find that \sys provides strong benefits over existing datacenter transport mechanisms. \sys is most closely related to D$^3$~\cite{bntl}, which also tries to meet flow deadlines. Unlike D$^3$, which is a ``first-come first-reserve'' algorithm, \sys proactively and preemptively gives network resources to the most critical flows.  For deadline-constrained flows, our evaluation shows \sys supports $3$ times more concurrent senders than \cite{bntl} while satisfying their flow deadlines. When flows have no deadlines, we show \sys can reduce mean flow completion times by $\sim$$30\%$ or more compared with TCP, RCP, and D$^3$.

The key contributions of this paper are:
\begin{itemize}
\listskip\item We design and implement \sys, a distributed flow scheduling layer for data centers which can approximate a range of scheduling disciplines.
\listskip\item We build on \sys to implement flow scheduling disciplines that minimize mean flow completion time and the number of deadline-missing flows.
\listskip\item We demonstrate \sys can save $\sim$$30\%$ average flow completion time compared with TCP, RCP and \bntl; and can support $3\times$ as many concurrent senders as \bntl while meeting flow deadlines.
\listskip\item We show that \sys is stable, resilient to packet loss, and preserves nearly all its performance gains even given inaccurate flow information.
\listskip\item We develop and evaluate a multipath version of \sys, showing further performance and reliability gains.
\end{itemize}

\section{Overview}
\label{sec:overview}

We start by presenting an example to demonstrate potential benefits of \sys
over existing approaches~(\S\ref{sec:overview:benefits}). 
We then give a description of key challenges that \sys must address~(\S\ref{sec:overview:challenges}).


\subsection{Example of Benefits}
\label{sec:overview:benefits}

Consider the scenario shown in Figure~\ref{fig:ex}, where three concurrent
flows ($f_A$, $f_B$, and $f_C$) arrive \matt{simultaneously.}

\paragraphb{Deadline-unconstrained Case:} Suppose that the flows have no deadlines, and our
objective is to minimize the average flow completion time.
Assuming a fluid traffic model (infinitesimal units of transmission),
the result given by fair sharing is shown in Figure~\ref{fig:ex1}:
[$f_A$,$f_B$,$f_C$] finish at time [$3$,$5$,$6$], and the average flow completion time is $\frac{3+5+6}{3}=4.67$.
If we schedule the flows by SJF (one by one according to flow size), as shown in Figure~\ref{fig:ex2},
the completion time becomes $\frac{1+3+6}{3}=3.33$, a savings of $\sim$$29\%$ compared to fair sharing.
Moreover, for every individual flow, the flow completion time in SJF is no larger than that given by fair sharing.

\begin{figure}[t]
\centering
\hspace{-5pt}
\subfloat[]{\small\label{fig:ex:parameter}
\begin{tabular}[b]{c|c|c}
Flow ID & Size & Deadline \\\hline\hline
$f_A$   & $1$  &  $1$ \\
$f_B$   & $2$  &  $4$ \\
$f_C$   & $3$  &  $6$
\end{tabular}
}
\subfloat[]{ \label{fig:ex1} \includegraphics[width=1.6in]{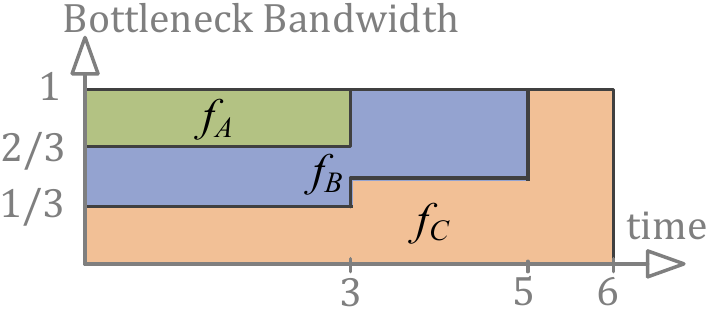}}\objtoobj\\
\subfloat[]{ \label{fig:ex2} \includegraphics[width=1.6in]{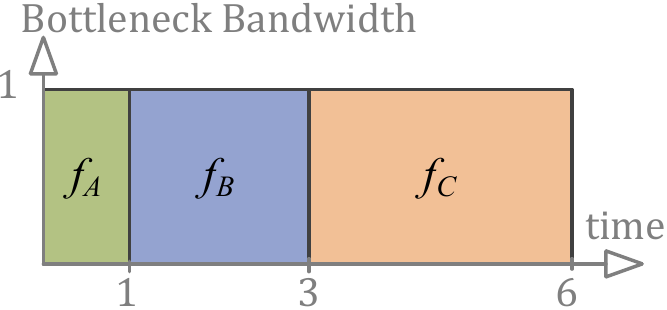}}\hspace{-5pt}
\subfloat[]{ \label{fig:ex3} \includegraphics[width=1.6in]{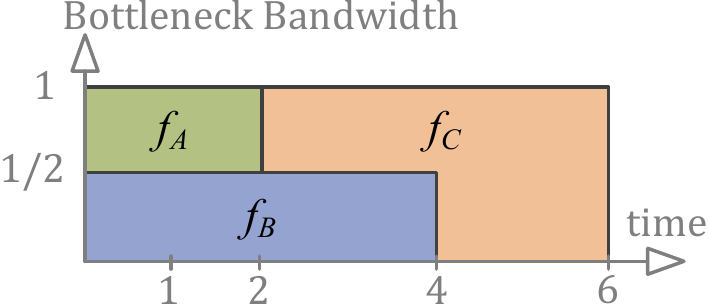}}\objtolbl
\caption{\figtitle{Motivating Example. (a) Three concurrent flows competing for a single bottleneck link; (b) Fair sharing; (c) SJF/EDF;
(d) \bntl for flow arrival order $f_B$$\leadsto$$f_A$$\leadsto$$f_C$. }}
\label{fig:ex}
\end{figure}

\paragraphb{Deadline-constrained Case:} Suppose now that the flows have deadlines, as specified in Figure~\ref{fig:ex:parameter}.
The objective becomes \matt{minimizing} the number of tardy flows, i.e., \matt{maximizing} the number of flows that meet their deadlines.
For fair sharing, both flow $f_A$ and $f_B$ fail to meet their deadlines, as shown in Figure~\ref{fig:ex1}.
If we schedule the flows by EDF (one by one according to deadline), as shown in Figure~\ref{fig:ex2},
every flow can finish before its deadline.

Now we consider \bntl, a recently proposed deadline-aware protocol for data center networks~\cite{bntl}.
When the network is congested, \bntl satisfies as many flows as
possible according to the flow request rate in the order of their arrival. In particular, each flow will request 
a rate $r=\frac{s}{d}$, where $s$ is the flow's size and $d$ is the time until its deadline.
Therefore, the result of \bntl depends highly on flow arrival order.
Assuming flows arrive in the order $f_B$$\leadsto$$f_A$$\leadsto$$f_C$, the result of \bntl is shown in Figure~\ref{fig:ex3}.
Flow $f_B$ will send with rate $\frac{2}{4}=0.5$ and will finish right before its deadline.
However, flow $f_A$, which arrives later than $f_B$, will fail to meet its deadline using the
remaining bandwidth, as evident in Figure~\ref{fig:ex3}.
In fact, out of $3!=6$ possible permutations of flow arrival order, \bntl will fail to satisfy some of the deadlines for $5$ cases, the only exception
being the order $f_A$$\leadsto$$f_B$$\leadsto$$f_C$, which is the order EDF chooses.
Although \bntl also allows senders to terminate flows that fail to meet their deadlines (to save bandwidth), termination does not help in this scenario and is not
presented in Figure~\ref{fig:ex3}.


\subsection{Design Challenges}
\label{sec:overview:challenges}
Although attractive performance gains are seen from the example, many design challenges remain to realize the expected benefits.

\paragraphb{Decentralizing Scheduling Disciplines:}
Scheduling disciplines like EDF or SJF are centralized algorithms that require
global knowledge of flow information, introducing a single point of failure and significant overhead
for senders to interact with the centralized coordinator. For example, a centralized scheduler introduces
considerable flow initialization overhead, while becoming
a congestive hot-spot. This problem is especially severe in data center workloads where the majority of flows are mice.
A scheduler maintaining only elephant flows like DevoFlow~\cite{devoflow} seems unlikely to succeed in congestion control as deadline
constraints are usually associated with mice.
The need to address the above limitations leads to \sys, a fully distributed solution where switches collaboratively control flow schedules.

\paragraphb{Switching Between Flows Seamlessly:}
The example of \S\ref{sec:overview:benefits} idealistically assumed
we can start a new flow immediately after a previous one terminates, enabling
all the transmission schedules (Figure~\ref{fig:ex1}, \ref{fig:ex2} and \ref{fig:ex3}) to
fully utilize the bottleneck bandwidth and thus complete flows as quickly as possible. 
However, achieving high utilization during flow switching in practice requires precise control of flow
transmission time.
One could simplify this problem by assuming synchronized time among
both switches and senders, but that introduces additional cost and effort to coordinate clocks.
\sys addresses this problem by \cycomment{starting the next 
set of waiting flows slightly before the current sending flows finish.}




\paragraphb{Prioritizing Flows using FIFO Tail-drop Queues:}
One could implement priority queues in switches to approximate flow scheduling by enforcing packet priority.
Ideally, this requires that each of the concurrent flows has a unique priority class. 
\cycomment{However, a data center switch can have several thousand active flows within a one second bin~\cite{imcDCmeasure}, while modern 
switches support only $\sim$$10$ priority classes~\cite{bntl}}.
Therefore, for today's data center switches, the number of priority classes per port is far below the requirements of such an approach, and it is unclear 
whether modifying switches to support a larger number of priority classes can be cost-effective. 
To solve this, \sys explicitly controls the flow sending rate to regulate flow traffic and retain packets from low-priority flows at senders.
With this flow pausing strategy, \sys only requires per-link FIFO tail-drop queues at switches.

\secendskip

\newcommand{\maxrate}{$\mathcal{R}$\xspace}
\newcommand{\maxratez}{$\mathcal{R}^{\max}$\xspace}
\newcommand{\maxrateh}{\maxrate{}$_{H}^{\max}$\xspace}
\newcommand{\maxrates}{\maxrate{}$_{S}^{\max}$\xspace}
\newcommand{\maxratei}{\maxrate{}$_{i}^{\max}$\xspace}
\newcommand{\maxratej}{\maxrate{}$_{j}^{\max}$\xspace}

\newcommand{\schrate}{$\mathcal{R}$\xspace}
\newcommand{\schrateh}{\maxrate{}$_{H}^{\textrm{sch}}$\xspace}
\newcommand{\schrates}{\maxrate{}$_{S}^{\textrm{sch}}$\xspace}
\newcommand{\schratei}{\maxrate{}$_{i}^{\textrm{sch}}$\xspace}
\newcommand{\schratej}{\maxrate{}$_{j}^{\textrm{sch}}$\xspace}

\newcommand{\rtt}{$RTT$\xspace}
\newcommand{\rtth}{\rtt{}$_{H}$\xspace}
\newcommand{\rtts}{\rtt{}$_{S}$\xspace}
\newcommand{\rtti}{\rtt{}$_{i}$\xspace}
\newcommand{\rttj}{\rtt{}$_{j}$\xspace}

\newcommand{\rate}{$\mathcal{R}$\xspace}
\newcommand{\rateh}{\rate{}$_{H}$\xspace}
\newcommand{\rates}{\rate{}$_{S}$\xspace}
\newcommand{\ratei}{\rate{}$_{i}$\xspace}
\newcommand{\ratej}{\rate{}$_{j}$\xspace}
\newcommand{\pauseby}{$\mathcal{P}$\xspace}
\newcommand{\pausebyh}{\pauseby{}$_{H}$\xspace}
\newcommand{\pausebys}{\pauseby{}$_{S}$\xspace}
\newcommand{\pausebyi}{\pauseby{}$_{i}$\xspace}
\newcommand{\deadline}{$\mathcal{D}$\xspace}
\newcommand{\deadlineh}{\deadline{}$_{H}$\xspace}
\newcommand{\deadlines}{\deadline{}$_{S}$\xspace}
\newcommand{\deadlinei}{\deadline{}$_{i}$\xspace}
\newcommand{\size}{$\mathcal{S}$\xspace}
\newcommand{\sizeh}{\size{}$_{H}$\xspace}
\newcommand{\sizes}{\size{}$_{S}$\xspace}
\newcommand{\sizei}{\size{}$_{i}$\xspace}
\newcommand{\retime}{$\mathcal{T}$\xspace}
\newcommand{\retimeh}{\retime{}$_{H}$\xspace}
\newcommand{\retimes}{\retime{}$_{S}$\xspace}
\newcommand{\retimei}{\retime{}$_{i}$\xspace}
\newcommand{\pf}{$\mathcal{I}$\xspace}
\newcommand{\pfh}{\pf{}$_{H}$\xspace}
\newcommand{\pfs}{\pf{}$_{S}$\xspace}
\newcommand{\pfi}{\pf{}$_{i}$\xspace}

\newcommand{\snd}{$Active$\xspace}
\newcommand{\sndi}{\snd{}$_{i}$\xspace}

\section{Protocol}
\label{sec:algorithm}

\matt{
We first present an overview of our design. We then describe the design
details implemented at the sender (\S\ref{sec:algorithm:sender}), receiver (\S\ref{sec:algorithm:receiver}) and
switches (\S\ref{sec:algorithm:switch}). 
This section assumes each flow uses a single path.
In \S\ref{sec:multipath}, we will show how \sys can be extended to support multipath forwarding.
}




\paragraphb{Centralized Algorithm:} To clarify \matt{our approach,} we start by presenting it as an idealized {\em centralized} scheduler with
complete visibility of the network, able to communicate with devices in
the network with {\em zero delay}. To simplify exposition, the centralized scheduler assumes that
flows have no deadlines, and our only goal is to optimize flow completion time. 
We will later relax these assumptions.


We define the \emph{expected flow transmission time}, denoted by \retimei for any flow $i$, to be the remaining 
flow size divided by its \emph{maximal sending rate} \maxratei.
The maximal sending rate \maxratei is the minimum of the sender NIC rate, the switch link 
rates, and the rate that receiver can process and receive. 
Whenever network workload changes (a new flow arrives, or an existing flow terminates), the centralized scheduler recomputes the
flow transmission schedule as follows: 
\begin{enumerate}
\listskip\item $B_e =$ available bandwidth of link $e$, initialized to $e$'s line rate.	
\listskip\item For each flow $i$, in increasing order of \retimei:	
		\begin{enumerate}[topsep=0pt, partopsep=0pt]
			\item \listskips Let $P_i$ be flow $i$'s path.
			\item \listskips Send flow $i$ with rate \schratei$ = \min_{\forall e \in P_i}(\textrm{\maxratei}, B_e)$. 
			\item \listskips $B_e \gets B_e - $ \schratei for each $e \in P_i$.
		\end{enumerate}
\listskip
\end{enumerate}

\cut{\begin{description}
  \item[Step 1.] Sort the flows in non-decreasing order of \retime. Let $B_e$ be the 
available bandwidth of any link $e$, initialized to the link line rate. 
Perform Step 2 for each flow $i$.
  \item[Step 2.] Send flow $i$ with rate \schratei$ = \min_{\forall e}(\textrm{\maxratei}, B_e)$, where $e$ runs 
over every link along the routing path. 
Subtract \schratei from each $B_e$ that flow $i$ traverses.
\end{description}}


\paragraphb{Distributed Algorithm:} We eliminate the unrealistic assumptions we made in the centralized algorithm to construct a fully
distributed realization of our design. To distribute its operation,
\sys switches propagate flow information via explicit feedback in packet headers.
\sys senders maintain a set of flow-related variables such as flow sending rate and flow size and communicate the flow information 
to the intermediate switches via a scheduling header added to the transport layer of each data packet.
When the feedback reaches the receiver, it is returned to the sender in an ACK packet.
\sys switches monitor the incoming traffic rate of each of their output queues and inform the sender to send data with 
a specific rate (\rate{}$>$$0$) or to pause (\rate{}$=$$0$) by annotating the scheduling header of data/ACK packets.
We present the details of this distributed realization in the following sections.

\subsection{\sys Sender}
\label{sec:algorithm:sender}

Like many transport protocols, a \sys sender sends a SYN packet for 
flow initialization and a TERM packet for flow termination, and resends a packet after a timeout. 
The sender maintains standard data structures for reliable 
transmission, including estimated round-trip time and states (e.g., timer) for in-flight packets.
The \sys sender maintains several state variables: its current sending 
rate (\rates, initialized to zero), the ID of the switch (if any) who has paused the flow (\pausebys, initialized to \o), 
flow deadline (\deadlines, which is optional), the expected flow transmission time (\retimes, initialized to the flow size divided by sender NIC rate), 
the inter-probing time (\pfs, initialized to \o), and the measured RTT (\rtts, estimated by an exponential decay).

The sender sends packets with rate \rates. 
If the rate is zero, the sender sends a \emph{probe} packet every $\textrm{\pfs}$ RTTs to get rate information from the switches.
A probe packet is a packet with a scheduling header but no data content.

On packet departure, the sender attaches a scheduling header to 
the packet, containing fields set based on the values of each of the sender's state variables above.
\rateh is always set to the maximal sending rate \maxrates, while the remaining fields in the scheduling header are set 
to its current maintained variables.
Note that the subscript $H$ refers to a field in the scheduling \underline{h}eader; 
the subscript $S$ refers to a variable maintained by the \underline{s}ender; 
the subscript $i$ refers to a variable related to the $i$th flow in the switch's flow list.

Whenever an ACK packet arrives, the sender updates its flow sending rate based on the 
feedback: \retimes is updated based on the remaining flow size, \rtts is updated based on the packet arrival time, \matt{and}
the remaining variables are set to the fields in the scheduling header.


\paragraphb{Early Termination:} For deadline-constrained flows, when 
the incoming flow demand exceeds the network capacity, there might not exist a feasible schedule 
for all flows to meet their deadlines. In this case, it is desirable to discard a minimal number of flows 
while satisfying the deadline of the remaining flows. Unfortunately, minimizing the number of 
tardy flows in a dynamic setting is an $\mathcal{NP}$-complete problem.\footnote{Consider a 
subproblem where a set of concurrent flows that share a bottleneck link all have the same deadline. 
This subproblem of minimizing the number of tardy flows is exactly the $\mathcal{NP}$-complete subset sum problem~\cite{NPCproblem}.
}

Therefore, we use a simple heuristic, called \emph{Early Termination}, to terminate a flow when it cannot meet its deadline. \matt{Here,} 
the sender sends a TERM packet whenever \emph{any} of the following conditions \matt{happen}:
\begin{enumerate}
\item \listskip Deadline is past (Time $>$ \deadlines).
\item \listskip The remaining flow transmission time is larger than the time to deadline (Time $+$ \retimes $>$ \deadlines).
\item \listskip The flow is paused (\rates $=0$), and the time to deadline is smaller than an RTT (Time $+$ \rtts $>$ \deadlines).
\end{enumerate}

\subsection{\sys Receiver}
\label{sec:algorithm:receiver}
A \sys receiver copies the scheduling header from each data packet to its corresponding ACK. Moreover, to avoid 
the sender overrunning the receiver's buffer, the \sys receiver 
reduces \rateh if it exceeds the maximal rate that receiver can process and receive.

\subsection{\sys Switch}
\label{sec:algorithm:switch}

\cycomment{The high-level objective of a \sys switch is to let the most \emph{critical} flow complete as soon as possible. 
To this end, switches share a common flow comparator, which decides flow criticality, to approximate a range of scheduling disciplines.
In this study, we implement two disciplines, EDF and SJF, while we give higher priority to EDF.
In particular, we say a flow is more critical than another one if it has smaller deadline (emulating EDF to minimize the number of deadline-missing flows). 
When there is a tie or flows have no deadline, we break it by giving priority to the flow with smaller expected transmission 
time (emulating SJF to minimize mean flow completion time). If a tie remains, we break it by flow ID. 
If desired, the operator could easily override the comparator to approximate other scheduling 
disciplines. For example, we also evaluate another scheduling discipline incorporating flow waiting time in \S\ref{sec:discussion}.}


\cycomment{The switch's purpose is to resolve flow contention: flows can preempt less critical flows to achieve the highest possible sending rate. 
To achieve this goal, the switches maintain state about flows on 
each link (\S\ref{sec:algorithm:switch:s}) and exchange information by tagging the scheduling header.
To compute the rate feedback (\rateh), the switch uses a flow controller (controlling which flows to send; \S\ref{sec:algorithm:switch:fc}) and a rate 
controller (computing the aggregate flow sending rate; \S\ref{sec:algorithm:switch:rc}). }

\subsubsection{Switch State}
\label{sec:algorithm:switch:s}

In order to resolve flow contention, the switch maintains state about flows on each link.  Specifically, it remembers the most recent variables (<\ratei, \pausebyi, \deadlinei, \retimei, \rtti{}>) obtained from observed packet headers for flow $i$, which it uses to decide at any moment the correct sending rate for the flows.  However, we do not have to keep this state for {\em all} flows.  Specifically, \sys switches only store the most critical $2\kappa$ flows, where $\kappa$ is the 
number of \emph{sending} flows (i.e., flows with sending rate \rates$>0$). Since \sys allocates as much link bandwidth as possible to the most critical flows 
until the link is fully utilized, the $\kappa$ most critical flows fully utilize the link's bandwidth; we store state for $2\kappa$ flows in order to have sufficient information immediately available to un-pause another flow if one of the sending flows completes.  The remaining flows are not remembered by the switch, until they become sufficiently critical.

The amount of state maintained at the switch thus depends on how many flows are needed to fill up a link.  
In most practical cases, this value will be very small because \cycomment{(i) \sys allows critical flows to send with their highest possible rates, and (ii)} switch-to-switch links are typically only $1 - 10\times$ faster than server-to-switch 
links, e.g., current data center networks mostly use $1$ Gbps server links and $10$ Gbps switch links\footnote{For example, \matt{the} NEC PF5240 switch supports 48 $\times$ $1$ Gbps ports, along with $2\times10$ Gbps ports; Pronto 3290 switch provides $48\times 1$ Gbps ports and $4 \times 10$ Gbps ports.}, and the next 
generation will likely be $10$ Gbps server links and $40$ or $100$ Gbps switch links.  
However, if a flow's rate is limited to something less than its NIC rate (e.g., due to processing or disk bottlenecks), switches may \cycomment{need to} store more flows.


\cycomment{
Greenberg et al.~\cite{vl2} demonstrated that, under a production data center of a large scale cloud service, the 
number of concurrent flows going in and out of a machine is almost never more than $100$. 
Under a pessimistic scenario where \emph{every} server concurrently sends or receives $100$ flows, we have an average of $12$,$000$ active flows at each switch in a VL2 network (assuming flow-level equal-cost multi-path forwarding and $24$ $10$-Gbps Ethernet ports for each switch, the same as done in~\cite{vl2}).
Today's switches are typically equipped 
with $4 - 16$ MByte of shared high-speed memory\footnote{For example, the ``deep-buffered'' switches 
like Cisco Catalyst 4500, 4700, 4900 and 4948 series have $16$ MByte shared memory, while shallow-buffered 
switches like Broadcom Triumph and Scorpion have $4$ MByte shared memory~\cite{dctcp}.}, while storing all these flows requires $0.23$ MByte, only $5.72\%$ of a $4$ MByte shared memory.
Indeed, in our simulation using the trace 
from~\cite{imcDCmeasure}, the maximum memory consumption was merely $9.3$ KByte.}

Still, suppose our memory imposes a hard upper limit $M$ on the number of flows the switch can store. \sys, as described so far, will cause under-utilization when $\kappa > M$ and there are paused flows wanting to send.
In this underutilized case, we run an RCP~\cite{rcp} rate controller---which does not require per-flow state---alongside \sys. We inform RCP that its maximum link capacity is the amount of capacity not used by \sys, and 
we use RCP only for the less critical flows (outside the $M$ most critical) that are not paused by any other switches. RCP will let \emph{all} these flows run simultaneously using the leftover bandwidth.  Thus, even in this case of large $\kappa$ (which we expect to be rare), the result is simply a partial shift away from optimizing completion time and towards traditional fair sharing.

\subsubsection{The Flow Controller}
\label{sec:algorithm:switch:fc}
The flow controller performs Algorithm~\ref{alg:data} and~\ref{alg:ack} whenever it 
receives a data packet and an ACK packet, respectively.
The flow controller's objective is to \emph{accept} or \emph{pause} the flow.
A flow is accepted if \emph{all} switches along the path accept it. 
However, a flow is paused if \emph{any} switch pauses it.
This difference leads to the need for different actions:

\paragraphb{Pausing:} If a switch decides to pause a flow, it 
simply updates the ``pauseby'' field in the header (\pausebyh) to its ID. 
This is used to inform other switches and the sender that the flow should be paused.
Whenever a switch notices that a flow is paused by another switch, it removes the flow information from its state.
This can help the switch to decide whether it wants to accept other flows.

\paragraphb{Acceptance:}~To reach consensus across switches, flow acceptance takes two phases: (i) in 
the forward path (from source to destination), the switch computes the available bandwidth based on 
flow criticality (Algorithm~\ref{alg:delta}) and updates the rate and pauseby fields in the scheduling header;
(ii) in the reverse path, if a switch sees 
\matt{an empty pauseby field in}
the header, it updates 
the \emph{global} decision of acceptance to its state (\pausebyi and \ratei).

\matt{We now} propose several optimizations to refine our design:

\paragraphb{Early Start:} Given a set of flows that are not paused by other switches, the switch accepts flows according 
to their criticality until the link bandwidth is fully utilized and the remaining flows are paused.
Although this ensures that the more critical flows can preempt other flows to fully utilize the link bandwidth, this can lead 
to \emph{low link utilization when switching between flows}.
To understand why, consider two flows, A and B, competing for a link's bandwidth. 
Assume that flow A is more critical than flow B. Therefore, flow A is accepted to occupy the entire link's bandwidth, while 
flow B is paused and sends only probe packets, e.g., one per its RTT.
By the time flow A sends its last packet (TERM), the sender of flow B does not know it should 
start sending data because of the feedback loop delay.
In fact, it could take one to two RTTs before flow B can start sending data. 
Although the RTT in data center networks is typically very small (e.g., $\sim$$150~\mu$s), the high-bandwidth 
short-flow nature makes this problem non-negligible. 
In the worst case where all the flows are short control messages ($<$$10$ KByte) that could finish in just one RTT, 
links could be idle more than half the time.

To solve this, we propose a simple concept, called \emph{Early Start}, to provide seamless flow switching. 
The idea is to start the next set of flows slightly before the current sending flows finish. 
Given a set of flows that are not paused by other switches, 
a \sys switch classifies a currently sending flow as nearly completed if the flow will finish 
sending in $K$ RTTs (i.e., \retimei $<$ $K$ $\times$ \rtti), for some small constant $K$.
We let the switch additionally accept as many nearly-completed flows as possible according to their criticality and subject to 
the resource constraint: aggregated flow transmission time (in terms of its estimated RTT) of the 
accepted nearly-completed flows ($\sum_i$\retimei{}$/$$RTT$$_i$) is no larger than $K$.
The threshold $K$ determines how early and how many flows will be considered as nearly-completed. 
Setting $K$ to $0$ will prevent concurrent flows completely, resulting in low link utilization. 
Setting $K$ to a large number will result in congested links, increased queue sizes, and increased completion times of the most critical flows.
Any value of $K$ between $1$ and $2$ is reasonable, as the control loop delay is one RTT and the inter probing time is another RTT.
In our current implementation we set $K=2$ to maximize the link utilization, and we use the rate controller to drain the queue. 
Algorithm~\ref{alg:delta} describes this in pseudocode, and we will show that Early Start provides seamless flow switching (\S\ref{sec:evaluation}).

\paragraphb{Dampening:}~When a more critical flow arrives at a switch, \sys will pause the current flow and switch to the new flow.
However, bursts of flows that arrive concurrently are common in data center networks, \matt{and} can potentially cause frequent flow switching, 
resulting in temporary instability in the switch state.\footnote{We later show that \sys can quickly 
converge to the equilibrium state when the workload is stable (\S\ref{sec:properties}).} To suppress this, we 
use dampening: after a switch has accepted a flow, it can only accept other paused 
flows after a given small period of time, as shown in Algorithm~\ref{alg:data}.

\paragraphb{Suppressed Probing:} One could let a paused sender send one probe per RTT. However, this can 
introduce significant bandwidth overhead because of the \cycomment{small RTTs} in data center networks. 
For example, assume a $1$-Gbps network where flows have an RTT of $150~\mu$s. A paused flow that 
sends a $40$-byte probe packet per RTT 
consumes $\frac{40~\textrm{Byte}}{150~\mu\textrm{s}}/1~\textrm{Gbps} \approx 2.13\%$ of the total bandwidth.
The problem becomes more severe with larger numbers of concurrent flows.

To address this, we propose a simple concept, called \emph{Suppressed Probing} to reduce the probing frequency. 
We make an observation that only the paused flows that are about to start have a need to send probes frequently.
Therefore, it is desirable to control probing frequency based on the 
flow criticality and the network load. 
To control probing frequency, one would need to estimate {\em flow waiting time} 
(i.e., how long does it take until the flow can start sending).
Although it is considered hard to predict future traffic workloads in data centers, 
switches can easily estimate a lower bound of the flow waiting time by checking their flow list.
Assuming each flow requires at least $X$ RTTs to finish, a \sys switch estimates that a flow's waiting time is 
at least $X \times \max_{\forall\ell}\{\textrm{Index}(\ell)\}$ RTTs, where 
Index$(\ell)$ is the flow index in the list on link $\ell$.
The switch sets the inter-probing time field (\pfh) to $\max\{\textrm{\pfh}, X \times\textrm{Index}(\ell) \}$ in the scheduling 
header to control the sender probing rate (\pfs), as shown in Algorithm~\ref{alg:ack}.
The expected per-RTT probing overhead is significantly reduced from $O(n)$ ($n$ flows, 
each of which sends one probe per RTT) to $\frac{1}{X}\sum_{k=1}^{n}{1/k} = O(\log n)$.
In our current implementation, we conservatively set $X$ to $0.2$ RTT$_{avg}$. 

\begin{algorithm}[!h]
\footnotesize
    \If{\pausebyh = other switch}
    {
        Remove the flow from the list if it is in the list; \Return; 
    }
    \If{the flow is not in the list}
    {
        \If{the list is not full or the flow criticality is higher than the least critical flow in the list}
        {
           Add the flow into the list with rate \ratei = 0. Remove the least critical flow from the list whenever the list has more than $\kappa$ flows.
        }
        \Else
        {
           Set \rateh to RCP fair share rate;\\
           \lIf{\rateh = $0$}{\pausebyh = myID;}\\
           \Return;\\
        }
    }
    Let $i$ be the flow index in the list; Update the flow information: <\deadlinei{},\retimei{},$RTT_i$> = <\deadlineh{},\retimeh{},$RTT_H$>;\\
    
    \If{$W$$=$$\min(Availbw(i),$\rateh{}$)$$>$$0$} 
    {
        \If{the flow is not sending (\pausebyi $\ne$ \o), and the switch just accepted another non-sending flow}
        {
	    \pausebyh = myID; \pausebyi = myID; \codecomment{// Pause it}
        }
        \lElse
        {
            \pausebyh = \o; \rateh{}$=$$W$; \codecomment{// Accept it}
            
        }
    }
    \lElse
    {
        \pausebyh = myID; \pausebyi = myID; \codecomment{// Pause it}
    }
\caption{\sys Receiving Data Packet}
\label{alg:data}
\end{algorithm}
\algtotext

\begin{algorithm}[!h]
\footnotesize
   $X$=0; $A$=0\;
   \For{$(i=0;~i<j;~i=i+1)$}
   {
         \If{\retimei{}/$RTT_i$ < $K$~and~$X$ < $K$}
         {
             $X=X$ + \retimei/$RTT_i$\;
         }
         \Else
         {
	     $A=A$ + \ratei\;
         }
      \lIf{$A \ge C$}{\Return $0$\;}
   }
   \Return $C - A$\;
\caption{Availbw($j$)}
\label{alg:delta}
\end{algorithm}
\algtotext


\begin{algorithm}[!h]
\footnotesize
    \If{\pausebyh = other switch}
    {
        Remove the flow from the list if it is in the list;
    }
    \If{\pausebyh $\neq$ \o} 
    {
        \rateh = $0$; \codecomment{// Flow is paused}
    } 
    \If{the flow is in the list with index $i$}{\pausebyi = \pausebyh; \pfh = $\max$$\{$\pfh, $X \times i$$\}$; \ratei = \rateh;}
\caption{\sys Receiving ACK}
\label{alg:ack}
\end{algorithm}
\vspace{12pt}

\subsubsection{The Rate Controller}
\label{sec:algorithm:switch:rc}
The rate controller's objective is to control the aggregated flow sending rate of the flows accepted by the flow controller 
based on the queue size and the measured aggregate traffic.
The rate adjustment serves the following purposes. 
First, whenever the queue builds up due to the use of Early Start, it helps drain the queue right after flow switching.
Second, it helps tolerate the congestion caused by transient inconsistency. 
For example, if a packet carrying the pausing information gets lost, the 
corresponding sender that is supposed to stop will still be sending, and the rate 
controller can reduce the flow sending rate to react to the congestion.
Finally, this allows \sys to be friendly to other transport protocols in a multi-protocol network. 

The rate controller maintains a single variable $C$ to control the aggregated flow sending rate. 
This variable will be used to compute the sending rate field (\rateh) in the scheduling header, as shown in Algorithm~\ref{alg:delta}.

The rate controller updates $C$ every $2$ RTTs because of the feedback-loop delay: we need about one RTT latency 
for the adjusted rate to take effect, and one additional RTT is used to measure the link congestion with that newly adjusted sending rate.

The rate controller updates $C$ to $\max\{0,r_{\textrm{\sys}} - q / (2\times \textrm{RTT})\}$, where $q$ is the 
instantaneous queue size and $r_{\textrm{\sys}}$ is the per-link aggregated rate for \sys flows.
If all traffic is transported using \sys, one can configure the $r_{\textrm{\sys}}$ to be equal to the link rate. 
This allows \sys flows to send with its highest possible rate.
Otherwise, the network administrator can decide their priority by setting $r_{\textrm{\sys}}$ accordingly. 
For example, one could give preference to other protocols by periodically updating $r_{\textrm{\sys}}$ to the difference 
between the link rate and the measured aggregated traffic of the other protocols.
Alternatively, one could set it based on the per-protocol traffic amount to achieve fairness across protocols.

\section{Formal Properties}
\label{sec:properties}

In this section, we present two formal properties of \sys{} --- deadlock-freedom and finite convergence time. 

\paragraphb{Assumptions:} Without loss of generality, we assume there is no 
packet loss. Similarly, we assume flows will not be paused due to the use of flow dampening. 
Because \sys flows periodically send probes, the properties we discuss in this section will hold with additional latency when 
the above assumptions are violated. For simplicity, we also assume the link rate $C$ is equal to 
the maximal sending rate \maxrates (i.e., \schrates $=$ $0$ or $C$). Thus, each link accepts only one flow at a time.

\paragraphb{Definitions:} 
We say a flow is \emph{competing} with another flow if and only if they share at least one common link. 
Moreover, we say a flow $F_1$ is a \emph{precedential} flow of flow $F_2$ if and only if they are 
competing with each other and flow $F_1$ is more critical than flow $F_2$. 
We say a flow $F$ is a \emph{driver} if and only if (i) flow $F$ is more critical than any other competing flow, or (ii) all the 
competing flows of flow $F$ that are more critical than flow $F$ are non-drivers.


\paragraphb{Results:} \cycomment{In Appendix A, we verify that \sys has no \emph{deadlock}, which is a situation where two or 
more competing flows are paused and are each waiting for the other to finish (and therefore neither ever does).
In Appendix B, we further prove that \sys will converge to the \emph{equilibrium} in $P_{\max}+1$ RTTs for stable 
workloads, where $P_{\max}$ is the maximal number of precedential flows of any flow.
Given a collection of active flows, the equilibrium is defined as a state where 
the drivers are accepted while the remaining flows are paused.
}

\newtheorem{theorem}{Property}
\newtheorem{lemma}{Lemma}

\cut{

\begin{theorem}
\label{theorem:deadlock}
\vspace{-5pt}\textnormal{
\sys has no deadlock.
}
\end{theorem}
\begin{proof}
\vspace{-5pt}See Appendix A.
\cut{Deadlock is a situation where two or more competing flows are paused and are each waiting for the other to 
finish, and therefore neither ever does. We verify \sys has no deadlock by showing that a necessary condition of 
deadlock, Hold and Wait, is false. 
To reach this, we need to show that there is no \emph{partial acceptance}: a flow 
is accepted by some intermediate switches, while paused by other switches along the routing path.
In \sys, a flow is accepted \emph{only} after \emph{every} switch along the path accepts the flow. 
Moreover, if a \sys flow is paused, the switch who pauses this
flow will update the pauseby field in the scheduling header (\pausebyh) to its ID.
This information will eventually reach all the other switches along the path, as 
even the paused flows would send probes periodically.
Whenever a switch notices that a flow is paused by another switch, it will not consider accepting this flow.
Thus, a paused flow will not be accepted by \emph{any} switch along the path.}
\end{proof}
}


\cut{
\begin{theorem}
\label{theorem:accept}
\label{theorem:pause}
\textnormal{
If all the precedential flows of a flow $F$ are paused (or it has none), flow $F$ will be accepted in at most one RTT.
If any precedential flow of a flow $F$ is accepted, flow $F$ will be paused in at most one RTT.
}
\end{theorem}
}

\cut{
\begin{proof}
\prooffontsize
This property follows directly from the \sys flow controller algorithm.
\end{proof}
}

\cut{
\begin{theorem}
\label{theorem:converge}\vspace{-7pt}
\textnormal{
\sys will converge to the equilibrium state in $P_{\max}+1$ RTTs for stable 
workloads, where $P_{\max}$ is the maximal number of precedential flows of any flow.
Given a collection of active flows, the equilibrium is defined as a state where 
the drivers are accepted while the remaining flows are paused.
}
\end{theorem}
\begin{proof}\vspace{-5pt}
See Appendix B.
\end{proof}
}

\cut{
\begin{proof}
\prooffontsize
We show that when the workload is stable (no new flows arrive and no sending flow finishes), a flow will 
be accepted if it is a driver and will otherwise be paused in $P + 1$ RTTs, where $P$ is the number of its precedential flows.
We prove this will hold for any flow $F$ that is the $m$-th critical flow in the network by induction on $m$.
When $m=1$, flow $F$ is a driver according to Statement A. 
Thus, it will be accepted in one RTT according to Property~\ref{theorem:accept}. 
When $m=n+1$, there exist $n$ flows $F_1 \cdots F_n$ that are more critical than flow $F$. 
Without loss of generality, out of these $n$ flows, we assume there are $n' \le n$ precedential flows (as they are competing with flow $F$).
Suppose that flow $F$ is a driver. Then, according to Statement B, all these $n'$ flows are non-drivers. 
By the induction hypothesis, these $n'$ competing flows will all be paused in $P'+1$ RTTs, where $P'$ is the maximal possible number of precedential flows
of these $n'$ flows. As each of these $n'$ flows will have at most $n-1$ precedential flows, we have $P' \le n-1$.
After these flows are paused, it takes at most an RTT for switches to accept flow $F$ according to 
Property~\ref{theorem:accept}, and therefore the flow $F$ will be 
accepted in $(P'+1)+1 \le n+1$ RTTs. 
Suppose now that flow $F$ is not a driver. According to Statement B, among those $n'$ competing flows, there exists $>$$0$ drivers. 
Similarly, among these drivers, the maximal number of precedential flows of each driver is at most $n-1$. 
By the induction hypothesis, these drivers will be accepted in at most $n$ RTTs, and 
after this, flow $F$ will be paused in one RTT according to Property~\ref{theorem:pause}. 
\end{proof}
}

\section{\sys Performance}
\label{sec:evaluation}



\cycomment{In this section, we evaluate \sys's performance through comprehensive simulations. 
We first describe our evaluation setting~(\S\ref{sec:perf:setting}).
Under a ``query aggregation'' scenario, \sys achieves near-optimal 
performance and greatly outperforms \bntl, RCP and TCP~(\S\ref{sec:perf:aggregation}).
We then demonstrate that \sys retains its performance gains under 
different workloads, including two realistic data center workloads from measurement studies (\S\ref{sec:perf:workload}), followed 
by two scenarios to demonstrate that \sys does not compromise on traditional congestion control performance metrics~(\S\ref{sec:perf:dynamic}).
Moreover, \sys retains its performance benefits on a variety of data center topologies (Fat-Tree, BCube and Jellyfish) and provides 
clear performance benefits at all scales that we evaluated (\S\ref{sec:perf:scale}). 
Further, we show that \sys is highly resilient to inaccurate flow information and packet loss (\S\ref{sec:perf:badcases}).}


\subsection{Evaluation Setting}
\label{sec:perf:setting}

Our evaluation considers two classes of flows:

\paragraphb{Deadline-constrained Flows} are time sensitive flows that have specific deadline \matt{requirements} to meet. 
The flow size is drawn from the interval [$2$ KByte, $198$ KByte] using a uniform distribution, as done in a prior study~\cite{bntl}.
This represents query traffic ($2$ to $20$ KByte in size) and delay sensitive short messages ($>$$100$ KByte) in data center networks~\cite{dctcp}.
The flow deadline is drawn from an exponential distribution with mean $20$ ms, as suggested by~\cite{bntl}. 
\cycomment{However, some flows could have tiny deadlines that are unrealistic in real network applications.
To address this, we impose a lower bound on deadlines, and we set it to $3$ ms in our experiments.}
We use \emph{Application Throughput}, the percentage of flows that meet their deadlines, as the performance metric of deadline-constrained flows.

\paragraphb{Deadline-unconstrained Flows} are flows that have no specific deadlines, but it is desirable that they finish early.
For example, Dryad jobs that move file partitions across machines. Similarly, we assume the flow size 
is drawn uniformly from an interval with a mean of $100$/$1000$ KByte.
We use the average flow completion time as the performance metric.

We have developed our own event-driven packet-level simulator written in C++. The simulator models the following schemes:

\paragraphb{\sys:} We consider different variants of \sys. 
We use {\sys{}(Full)} to refer to the complete version of \sys, including Early Start (ES), Early Termination (ET) and Suppressed Probing (SP).
Likewise, we refer to the partial version of \sys which excludes the above three algorithms as {\sys{}(Basic)}.
To better understand the performance contribution of each algorithm, we further extend {\sys{}(Basic)} to {\sys{}(ES)} and {\sys{}(ES+ET)}.

\paragraphb{\bntl:} We implemented a complete version of \bntl~\cite{bntl}, including the rate request processing 
procedure, the rate adaptation algorithm (with the suggested 
parameters $\alpha=0.1$ and $\beta=1$), and the quenching algorithm.
In the original algorithm when the total demand exceeds the switch capacity, the fair share rate becomes negative. 
\cycomment{We found this can cause a flow to return the allocated bandwidth it already reserved, resulting in unnecessarily missed deadlines. 
Therefore, we add a constraint to enforce the fair share bandwidth $fs$ to always be non-negative, which improves \matt{\bntl's performance}.
}  

\paragraphb{RCP:} We implement RCP~\cite{rcp} and optimize it by counting the exact number of flows at switches. \cycomment{We found this 
improves the performance by converging to the fair share rate more quickly, significantly reducing 
the number of packet drops when encountering a sudden large influx of new flows~\cite{rcpproblem}.} This 
is exactly equivalent to \bntl when flows have no deadlines.

\paragraphb{TCP:} We implement TCP Reno and optimize it by setting a small $RTO_{\min}$ to alleviate the TCP 
Incast problem, as suggested by previous studies~\cite{dctcp,incast09}.

Unless otherwise stated, we use single-rooted tree, a commonly used data center topology 
for evaluating transport protocols~\cite{dctcp,incast09,bntl, ictcp}.
In particular, our default topology is a two-level $12$-server single-rooted tree topology with $1$ Gbps link rate (Figure~\ref{fig:topology2}), 
the same as used in \bntl. 
\matt{We vary the traffic workload and topology in \S\ref{sec:perf:workload} and \S\ref{sec:perf:scale}.}



\begin{figure}[t]
\centering
\hspace{5pt}\subfloat[]{\label{fig:topology2} \includegraphics[width=1.3in]{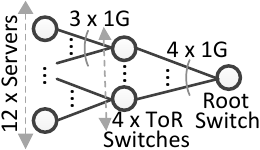} }\hspace{5pt}
\subfloat[]{\label{fig:topology1} \includegraphics[width=1.44in]{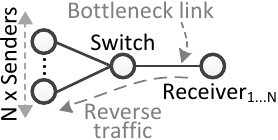} }\\\objtolbl
\caption{\figtitle{Example topologies: (a) a $17$-node single-rooted tree topology; (b) a single-bottleneck topology: sending servers associated with different
flows are connected via a single switch to the same receiving server. 
\cready{Both topologies use $1$ Gbps links, a switch buffer of $4$ MByte, and FIFO tail-drop queues. 
Per-hop transmission/propagation/processing delay is set to $11$/$0.1$/$25$ $\mu$s.}}}
\label{fig:topology}
\end{figure}

\subsection{Query Aggregation}
\label{sec:perf:aggregation}

In this section, we consider a scenario called \emph{query aggregation}: a number of senders 
initiate flows at the same time to the same receiver (the aggregator).
This is a very common application scenario in data center networks and has been adopted by a number of previous works~\cite{ictcp,bntl,dctcp}. 
We evaluate the protocols in both the deadline-constrained 
case~(\S\ref{sec:perf:aggregation:deadline}) and the deadline-unconstrained 
case~(\S\ref{sec:perf:aggregation:nodeadline}).

\subsubsection{Deadline-constrained Flows}
\label{sec:perf:aggregation:deadline}

\begin{figure}[!t]
\centering
\subfloat[]{ \label{fig:burst:nflow}   \includegraphics[width=3.2in]{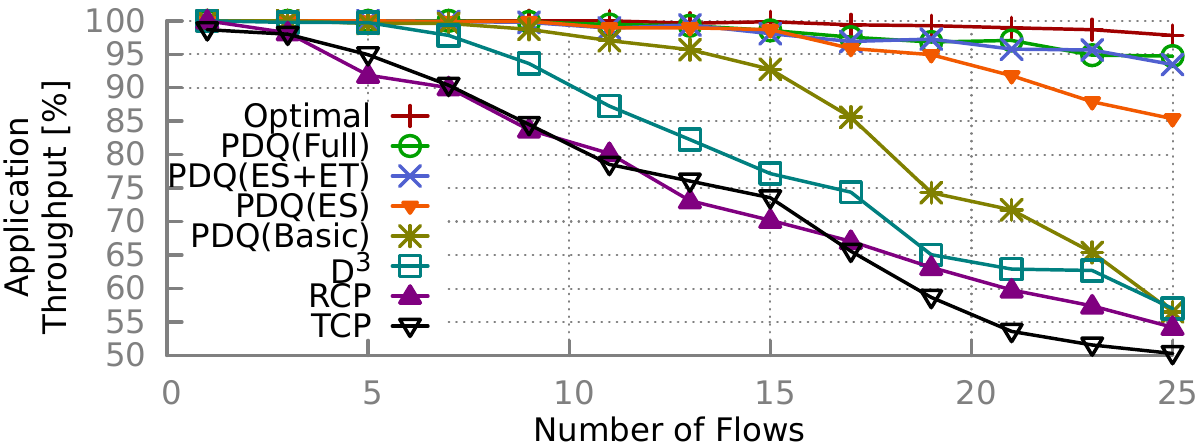}}\\\objtoobj
\subfloat[]{ \label{fig:burst:flowsize}\includegraphics[width=3.2in]{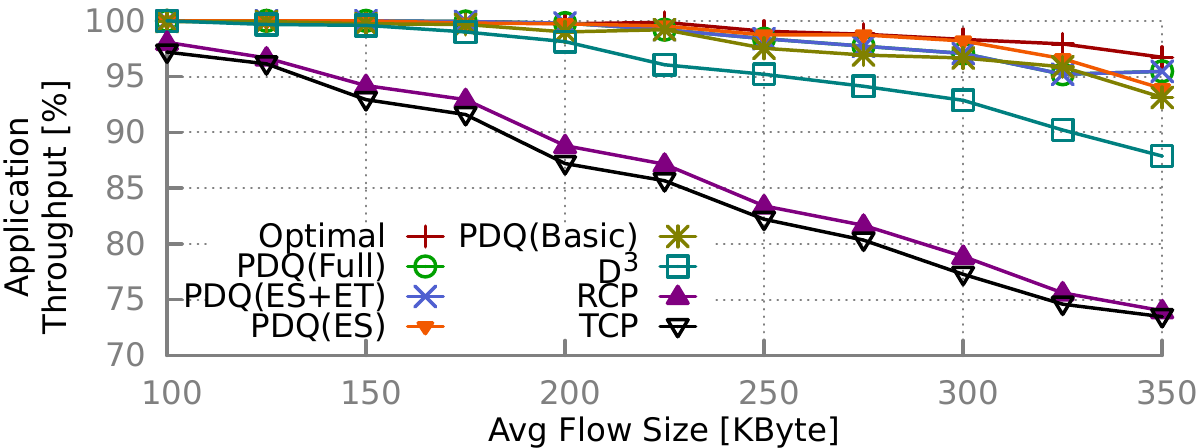}}\\\objtoobj
\subfloat[]{ \label{fig:burst:deadline}\includegraphics[width=3.2in]{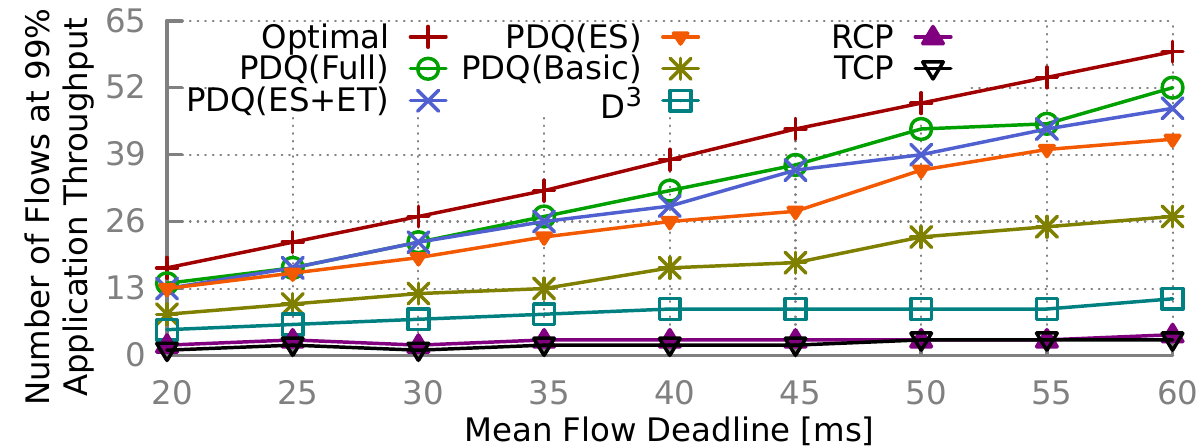}}\\\objtoobj
\subfloat[]{ \label{fig:burst:nflow:ND}   \includegraphics[width=3.2in]{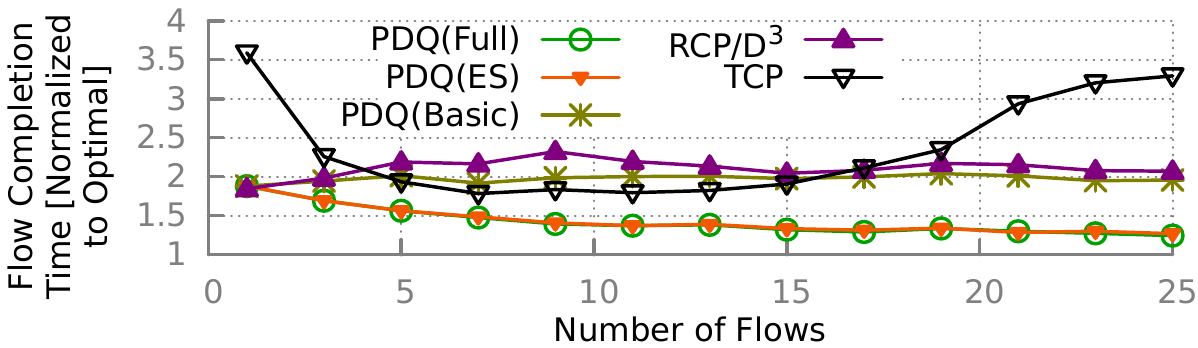}}\\\objtoobj
\subfloat[]{ \label{fig:burst:flowsize:ND}\includegraphics[width=3.2in]{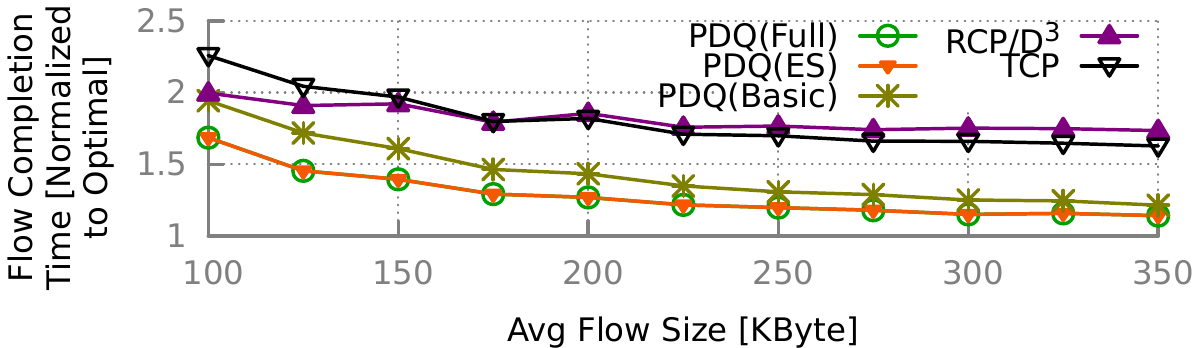}}\\\objtolbl
\caption{\figtitle{\sys outperforms \bntl, RCP and TCP and achieves near-optimal 
performance. Top three figures: deadline-constrained flows; bottom two figures: deadline-unconstrained flows.
\cut{Flow burst scenario2. Two-level single-rooted tree. Aggregation traffic. (a, b, c) Deadline-constrained flow. (d, e) Deadline-unconstrained flows, we plot the average flow completion time. (a, c) Varying the number of flows. (b, d) Varying the average flow size. 
}}}
\label{fig:burst}
\end{figure}

\paragraphb{Impact of Number of Flows:} 
We start by varying the number of flows.\footnote{We randomly assign $f$ flows to $n$ senders while 
ensuring each sender has either $\lfloor f/n \rfloor$ or $\lceil f/n \rceil$ flows.} To understand bounds on 
performance, we also simulate an {\em optimal} solution, where an omniscient scheduler can
control the transmission of any flow with no delay.
It first sorts the flows by EDF, and then \matt{uses} a dynamic programming algorithm 
to discard the minimum number of flows that cannot meet their deadlines (Algorithm $3.3.1$ in \cite{pinedo}).
We observe that \sys has near-optimal application throughput across a wide range of loads (Figure~\ref{fig:burst:nflow}). 

Figure~\ref{fig:burst:nflow} demonstrates that Early Start is very effective for short flows. 
By contrast, \sys{}(Basic) has much lower application throughput, especially during heavy system 
load because of the long down time between flow switching.
Early Termination further improves performance by discarding flows that cannot meet their deadline. 
Moreover, Figure~\ref{fig:burst:nflow} demonstrates that, as the number of concurrent 
flows increases, the application throughput of \bntl, RCP and TCP decreases significantly.

\paragraphb{Impact of Flow Size:} 
We fix the number of concurrent flows at $3$ and study the impact of increased flow size on the application throughput.
Figure~\ref{fig:burst:flowsize} shows that as the flow size increases, the performance of deadline-agnostic schemes (TCP and RCP) degrades 
considerably, while \sys remains very close to optimal regardless of the flow size. 
However, Early Start and Early Termination provide fewer benefits in this scenario 
because of the small number of flows.

\paragraphb{Impact of Flow Deadline:} 
Data center operators are particularly interested in the \matt{operating regime} where 
the network can satisfy almost every flow deadline.
To this end, we attempt to find, using a binary search procedure, 
the maximal number of flows a protocol can support while ensuring $99\%$ application throughput.
We also vary the flow deadline, which is drawn from an exponential distribution, to observe the 
system performance with regard to different flow deadlines with mean between $20$ ms to $60$ ms.
Figure~\ref{fig:burst:deadline} demonstrates that, compared with \bntl, \sys can 
support $>$$3$ times more concurrent flows at $99\%$ application throughput, and this 
ratio becomes larger as the mean flow deadline increases.
Moreover, Figure~\ref{fig:burst:deadline} shows that Suppressed Probing becomes more useful as the number of concurrent flows increases. 

\subsubsection{Deadline-unconstrained Flows}
\label{sec:perf:aggregation:nodeadline}

\paragraphb{Impact of Flow Number:} 
For deadline-unconstrained case, we first measure the impact of the number of flows on the average flow completion time.
Overall, Figure~\ref{fig:burst:nflow:ND} demonstrates that \sys can effectively approximate the optimal flow completion time. 
The largest gap between \sys and optimal happens when there exists only one flow and is due to flow initialization latency. 
RCP has a similar performance for the single-flow case. However, its flow completion time 
becomes relatively large as the number of flows increases. 
TCP displays a large flow completion time when the number of flows is small due to the inefficiency of slow start.
When the number of concurrent flows is large, TCP also has an increased flow completion time due to the TCP incast problem~\cite{incast09}.
 
\paragraphb{Impact of Flow Size:} 
We fix the number of flows at $3$, and Figure~\ref{fig:burst:flowsize:ND} shows the flow 
completion time as the flow size increases. We demonstrate that \sys can better approximate optimal flow completion time as flow size increases.
The reason is intuitive: the adverse impact of \sys inefficiency (e.g., flow initialization latency) on flow completion time becomes 
relatively insignificant as flow size increases.

\begin{figure*}[t]
\centering
\subfloat[]{ \label{fig:trafficp}   \includegraphics[width=6.8in]{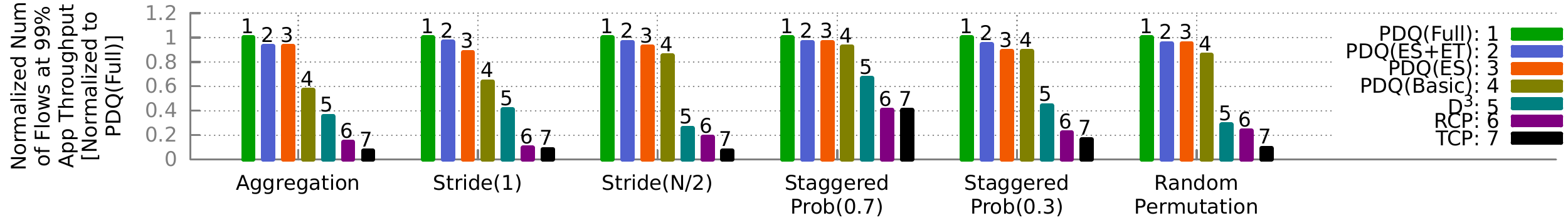}}\\\objtoobj
\subfloat[]{ \label{fig:trafficp:ND}   \includegraphics[width=6.8in]{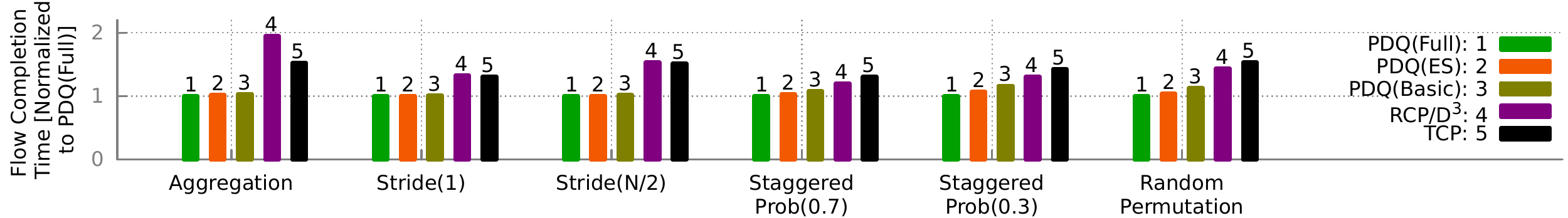}}\\\objtolbl
\caption{\figtitle{\sys outperforms \bntl, RCP and TCP across traffic patterns. (a) Deadline-constrained flows; (b) Deadline-unconstrained flows.}}
\label{fig:trafficpattern}
\end{figure*}

\subsection{Impact of \matt{Traffic} Workload}
\label{sec:perf:workload}

\paragraphb{Impact of Sending Pattern:}
We study the impact of the following sending patterns:  
\textbf{(i)} \emph{Aggregation}: multiple servers send to the same aggregator,
\matt{as done in the prior experiment}. 
\textbf{(ii)} \emph{Stride}($i$): a server with index $x$ sends to the host with 
index $(x+i)$ $mod$ $N$, where $N$ is the total number of servers; 
\textbf{(iii)} \emph{Staggered Prob}($p$): a server sends to another server under 
the same top-of-rack switch with probability $p$, and to any other server with probability $1-p$; 
\textbf{(iv)} \emph{Random Permutation}: a server sends to another randomly-selected server, with a constraint 
that each server receives traffic from exactly one server (i.e., $1$-to-$1$ mapping).

Figure~\ref{fig:trafficpattern} shows that \sys reaps its benefits across 
all the sending patterns under \matt{consideration}. 
The worst \matt{pattern} for \sys is the Staggered Prob($0.7$) due to the fact that 
the variance of the flow RTTs is considerably larger. In this \matt{sending} pattern, the non-local flows that 
pass through the core network could have RTTs $3-5$ times larger than the local flow RTTs. 
Thus, the \sys rate controller, whose update frequency is based on a measurement of \emph{average} flow RTTs, could 
slightly overreact (or underreact) to flows with relatively large (or small) RTTs. 
However, even in such a case, \sys still outperforms the other schemes considerably.

\paragraphb{Impact of \matt{Traffic Type}:}
We consider two workloads collected from real data centers.
First, we use a workload with flow sizes following the distribution from a 
large-scale commercial data center measured by Greenberg et al.~\cite{vl2}. 
It represents a mixture of long and short flows: 
\matt{Most flows are small, and most} of the delivered bits are contributed by long flows.
In the experiment, we assume that the short flows (with a size of $<$$40$ KByte) are deadline-constrained.
We conduct these experiments with random permutation traffic.

Figure~\ref{fig:vl2:mice} demonstrates that, under this particular workload, \sys outperforms the other protocols by supporting 
a significantly higher flow arrival rate. 
We observed that, in this scenario, \sys{}(Full) considerably outperforms \sys{}(ES+ET). 
This suggests that Suppressed Probing plays 
an important role in minimizing the probing overhead especially when there exists 
a large collection of paused flows. 
Figure~\ref{fig:vl2:elephant} shows that \sys has lower flow completion time for long flows: 
a $26\%$ reduction compared with RCP, and a $39\%$ reduction compared with TCP.

We also evaluate performance using a workload collected from a university data center with $500$ servers~\cite{imcDCmeasure}.
In particular, we first convert the packet trace, which lasts $10$ minutes, to flow-level summaries using Bro~\cite{bro}, then we fed it to the simulator.
Likewise, \sys outperforms other schemes in this regime (Figure~\ref{fig:imc:mixed}).

\begin{figure}[t]
\centering
\subfloat[]{ \label{fig:vl2:mice}   \includegraphics[width=3.2in]{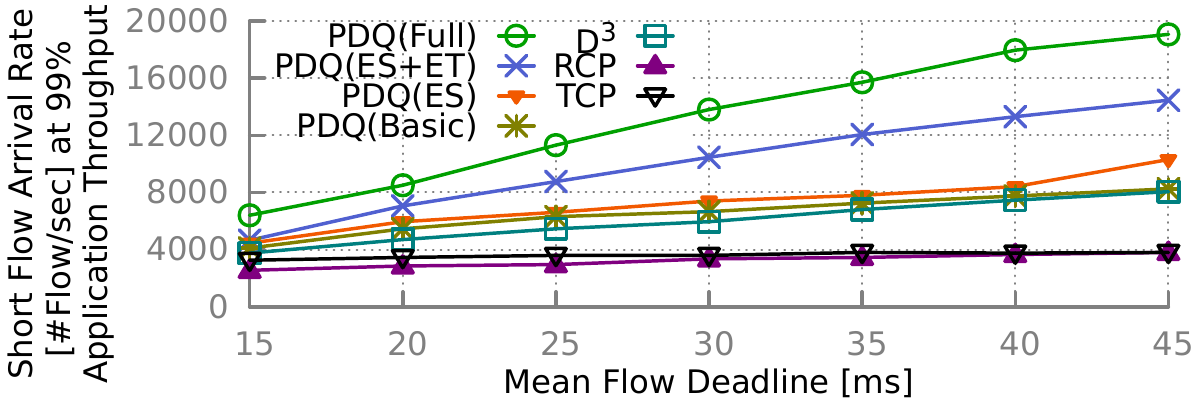}}\\\objtoobj
\hspace{-3pt}\subfloat[]{ \label{fig:vl2:elephant}   \includegraphics[width=1.4in]{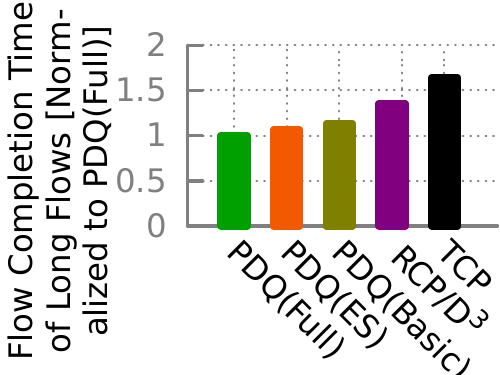}}\hspace{12pt}
\subfloat[]{ \label{fig:imc:mixed}   \includegraphics[width=1.4in]{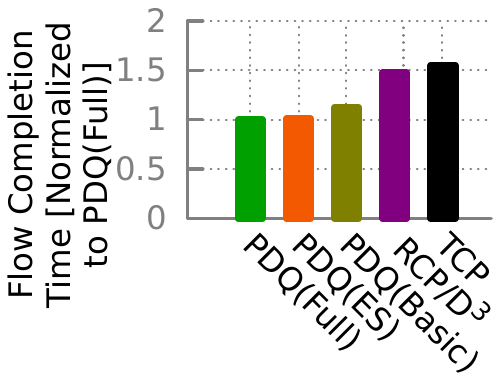}}\objtolbl
\caption{\figtitle{Performance evaluation under realistic data center workloads, collected from \emph{(a, b)} a 
production data center of a large commercial cloud service\cready{~\cite{vl2}} and \emph{(c)} a university data center located in Midwestern United States \cready{(EDU1 in \cite{imcDCmeasure})}.
}}
\label{fig:vl2}
\end{figure}

\subsection{\matt{Dynamics} of \sys}
\label{sec:perf:dynamic}

\matt{
Next, we show \sys's performance over time through two scenarios, each
with varying traffic dynamics:
}

\paragraphb{Scenario \#1 (Convergence Dynamics):} Figure~\ref{fig:case1} shows that \sys provides seamless flow switching.
We assume five flows that start at time $0$. The flows have no deadlines, and each flow has a size of $\sim$$1$ MByte.
The flow size is perturbed slightly such that a flow with smaller index is more critical.
Ideally, the five flows together take $40$ ms to finish because each flow requires 
a raw processing time of $\frac{1\textrm{~MByte}}{1\textrm{~Gbps}} = 8\textrm{~ms}$.
With seamless flow switching, \sys completes at $\sim$$42$ ms due to protocol
overhead ($\sim$$3\%$ bandwidth loss due to TCP/IP header) and first-flow initialization time (two-RTT latency loss; one RTT latency for the sender to receive the SYN-ACK,
and an additional RTT for the sender to receive the first DATA-ACK).
We observe that \sys can converge to equilibrium quickly at flow switching time,
resulting in a near perfect ($100\%$) bottleneck link utilization (Figure~\ref{fig:b_util}).
Although \matt{an alternative (naive)} approach to achieve such high link utilization is to let
every flow send with fastest rate, this causes the rapid growth of the queue and
potentially leads to congestive packet loss.
Unlike this approach, \sys exhibits a very small queue size\footnote{The non-integer values on the y axis comes from the small probing packets.}
and has no packet drops (Figure~\ref{fig:b_queue}).

\begin{figure}[t]
\centering
\subfloat[]{ \label{fig:b_throughput}   \includegraphics[width=3.2in]{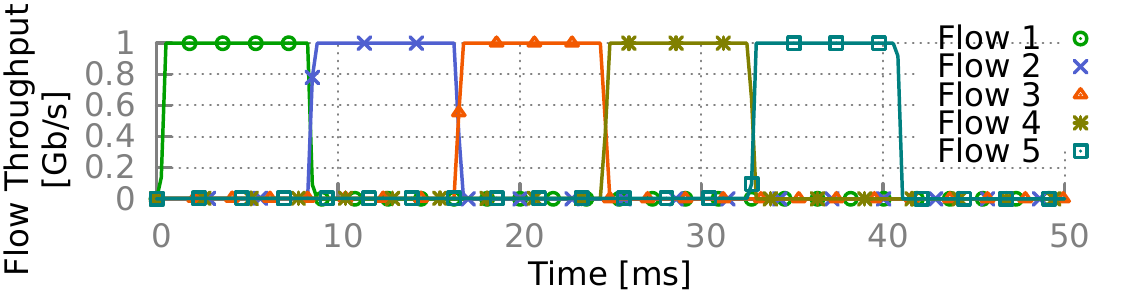}}\\\objtoobj
\subfloat[]{ \label{fig:b_util}   \includegraphics[width=3.2in]{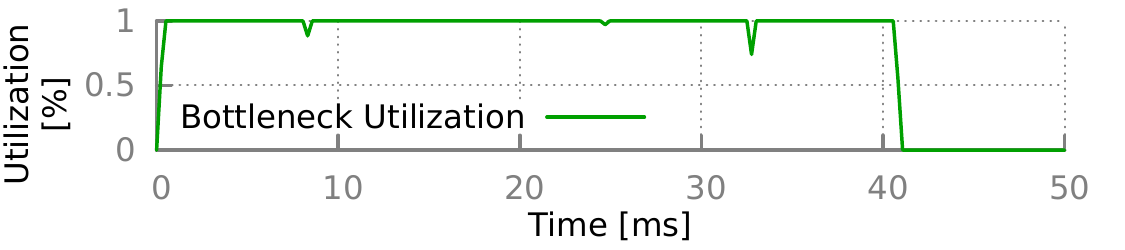}}\\\objtoobj
\subfloat[]{ \label{fig:b_queue}   \includegraphics[width=3.2in]{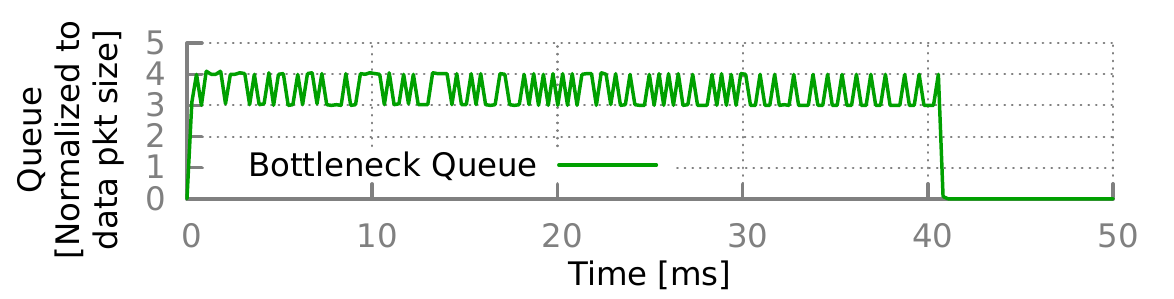}}\objtolbl
\caption{\figtitle{\sys provides seamless flow switching. It achieves high link utilization at flow switching time, maintains small queue, and converges
to the equilibrium quickly.}}
\label{fig:case1}
\end{figure}

\paragraphb{Scenario \#2 (Robustness to Bursty Traffic):} Figure~\ref{fig:case2} shows that \sys provides high robustness to bursty workloads.
We assume a long-lived flow that starts at time $0$, and $50$ short flows that all start at $10$ ms.
The short flow sizes are set to $20$ KByte with small random perturbation.
Figure~\ref{fig:m_throughput} shows that \sys adapts quickly to sudden bursts of flow arrivals.
Because the required delivery time of each short flow is very small ($\frac{20\textrm{~KByte}}{1\textrm{~Gbps}} \approx 153$ $\mu$s),
the system never reaches stable state during the preemption period (between $10$ and $19$ ms).
Figure~\ref{fig:m_util} shows \sys adapts quickly to the burst of flows while maintaining high utilization:
the average link utilization during the preemption period is $91.7\%$.
Figure~\ref{fig:m_queue} suggests that \sys does not compromise the queue length by having
only $5$ to $10$ packets in the queue, which is about an order of magnitude smaller than what today's data center switches can store.
By contrast, XCP in a similar environment results in a queue of $\sim$$60$ packets (Figure 11(b) in~\cite{xcp}).

\begin{figure}[t]
\centering
\subfloat[]{ \label{fig:m_throughput}   \includegraphics[width=3.2in]{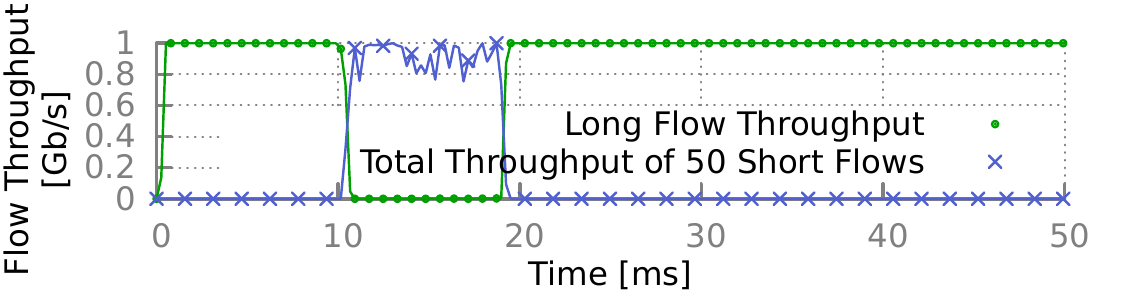}}\\\objtoobj
\subfloat[]{ \label{fig:m_util}   \includegraphics[width=3.2in]{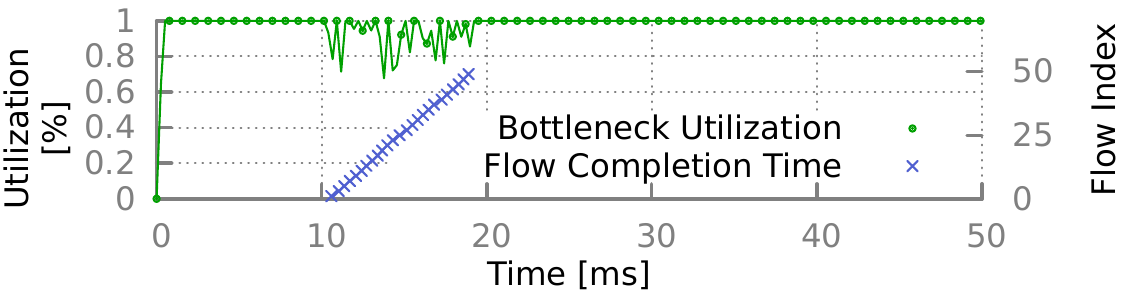}}\\\objtoobj
\subfloat[]{ \label{fig:m_queue}   \includegraphics[width=3.2in]{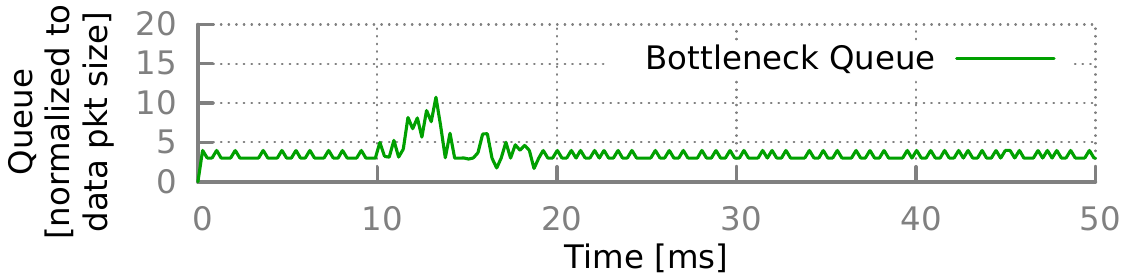}}\objtolbl
\caption{\figtitle{\sys exhibits high robustness to bursty workload.
We use a workload of $50$ concurrent short flows all start at time $1$ ms, and preempting a long-lived flow.}}
\label{fig:case2}
\end{figure}

\subsection{Impact of Network Scale}
\label{sec:perf:scale}
Today's data centers have many thousands of servers, and it remains unclear whether \sys will retain its successes at large scales.
Unfortunately, our packet-level simulator, which is optimized for high processing speeds, does not scale to 
large-scale data center topology within reasonable processing time. To study these protocols at large scales, we construct a 
\emph{flow-level simulator} for \sys, \bntl and RCP. In particular, we use an iterative approach 
to find the equilibrium flow sending rates with a time scale of $1$ ms. The flow-level simulator also considers 
protocol inefficiencies like flow initialization time and packet header overhead. Although the flow-level simulator does not deal with
packet-level dynamics such as timeouts or packet loss, Figure~\ref{fig:scale} shows that, by comparing with the 
results from packet-level simulation, the flow-level simulation does not compromise the accuracy significantly. 

We evaluate three scalable data center topologies: (1) Fat-tree~\cite{fattree}, a structured 2-stage Clos network; (2) BCube~\cite{bcube}, a server-centric 
modular network; (3) Jellyfish~\cite{jellyfish}, an unstructured high-bandwidth network using random regular graphs. 
Figure~\ref{fig:scale} demonstrates that \sys scales well to large scale, regardless of the topologies we tested.
Figure~\ref{fig:traffic:CDF} shows that about $40\%$ of flow completion times under \sys are reduced by at least $50\%$ compared to RCP. 
Only $5 - 15\%$ of the flows have a larger completion time, and no more than $0.9\%$ of the flows have $2\times$ completion time.

\begin{figure}[t]
\centering
\subfloat[]{ \label{fig:traffic}   \includegraphics[width=3.2in]{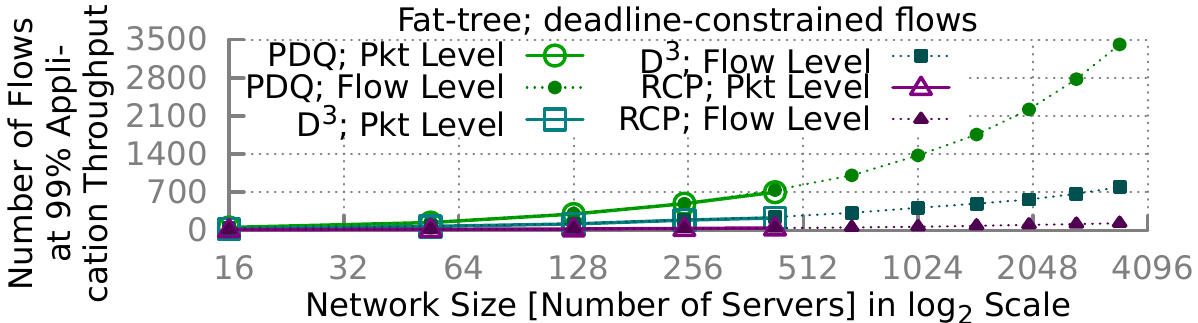}}\\\objtoobj
\hspace{7pt}\subfloat[]{ \label{fig:traffic:ND}   \includegraphics[width=3.2in]{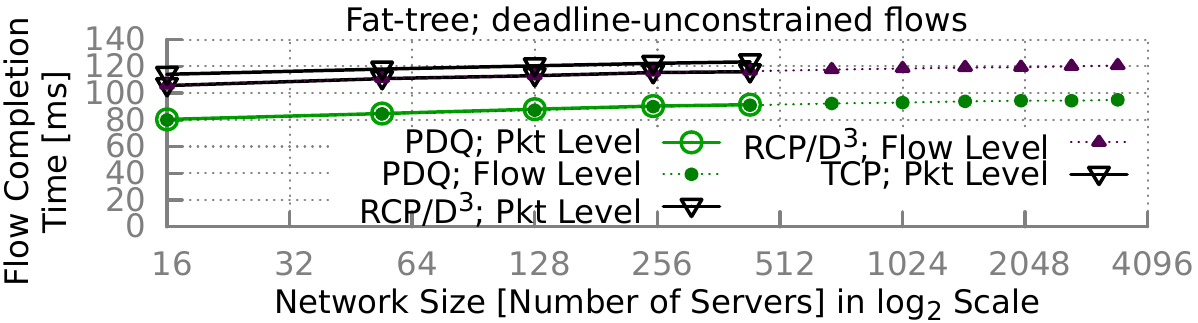}}\\\objtoobj
\subfloat[]{ \label{fig:traffic:NDBCube}   \includegraphics[height=1.2in]{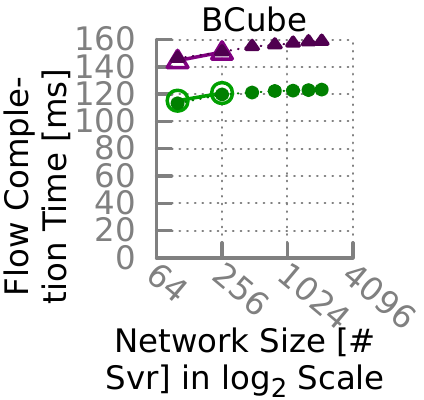}}\hspace{3pt}
\subfloat[]{ \label{fig:traffic:NDJellyfish}   \includegraphics[height=1.2in]{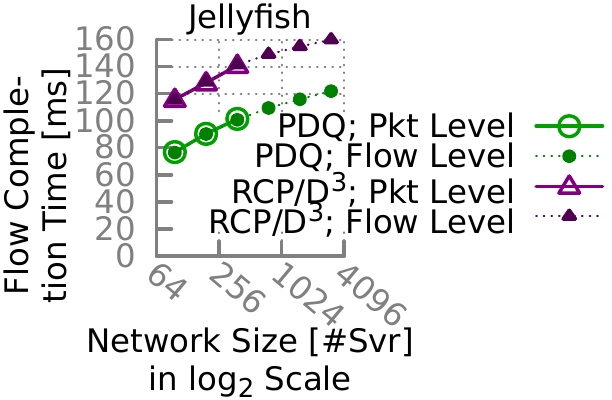}}\\\objtoobj
\subfloat[]{ \label{fig:traffic:CDF}   \includegraphics[width=3.2in]{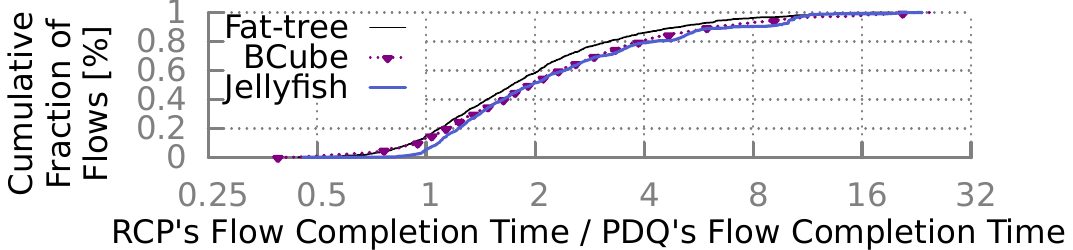}}\objtolbl
\caption{\figtitle{\sys performs \matt{well} across a variety of data center topologies. (a,b) Fat-tree; (c) BCube 
with dual-port servers; (d) Jellyfish with $24$-port switches, using a $2$$:$$1$ ratio of network port count to server port count.
(e) For network flows, the ratio of the flow completion time under \sys to the flow completion time under RCP (flow-level simulation; \# servers is $\sim$$128$).
All experiments are carried out using random permutation traffic; top figure: deadline-constrained flows; bottom four 
figures: deadline-unconstrained flows with $10$ sending flows per server.
}}
\label{fig:scale}
\end{figure}

\subsection{\sys Resilience}
\label{sec:perf:badcases}

\paragraphb{Resilience to Packet Loss:}
\matt{
Next, to evaluate \sys's performance in the presence of packet loss,}
we randomly drop packets at the bottleneck 
link, in both \matt{the} forward (data) and reverse (acknowledgment) direction. 
Figure~\ref{fig:error} demonstrates that \sys is even more resilient than TCP
\matt{to packet loss}. 
When packet loss happens, the \sys rate controller detects anomalous high/low link load quickly 
and compensates for it with explicit rate control. Thus, packet loss does not significantly 
affect its performance. For a heavily lossy channel where the packet loss rate is $3\%$ in both directions (i.e., a round-trip packet loss rate 
of $1-(1-0.03)^2\approx5.9\%$), as shown in Figure~\ref{fig:error:ND}, the flow completion time of \sys has increased by $11.4\%$, while 
that of TCP has significantly increased by $44.7\%$.

\begin{figure}[t]
\centering
\subfloat[]{ \label{fig:error:D}   \includegraphics[width=1.6in]{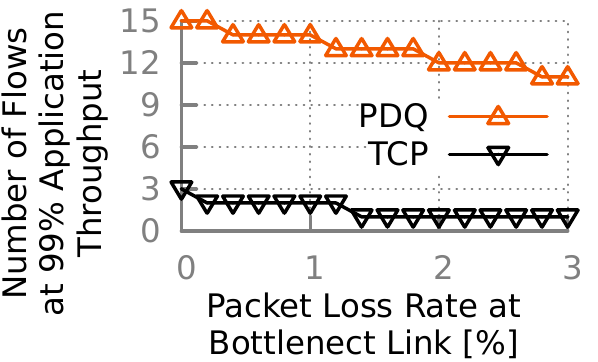}}
\subfloat[]{ \label{fig:error:ND}   \includegraphics[width=1.6in]{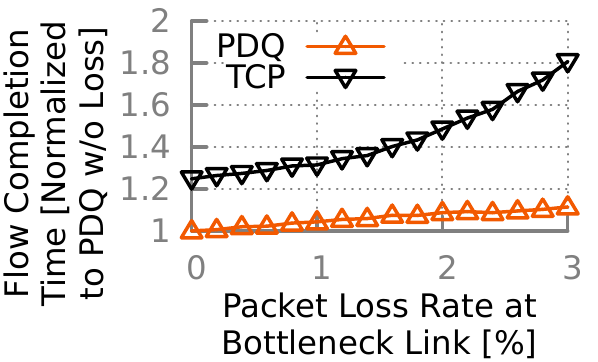}}\objtolbl
\caption{\figtitle{\sys is \matt{resilient to packet loss in both forward and reverse directions:}
(a) deadline-constrained and (b) deadline-unconstrained cases. Query aggregation workload.}}\label{fig:error}
\end{figure}

\paragraphb{Resilience to Inaccurate Flow Information:}
\cready{For many data center applications (e.g., web search, key-value stores, data processing), previous
studies have shown that flow size can be precisely known at flow initiation time.\footnote{See the discussion in \S 2.1 of \cite{bntl}.}
Even for applications without such knowledge, \sys is resilient to inaccurate flow size information. 
To demonstrate this, we consider the following two flow-size-unaware schemes. \emph{Random}: the sender randomly 
chooses a flow criticality at flow initialization time and uses it consistently. \emph{Flow Size Estimation}: the sender estimates the flow size 
based on the amount of data sent already, and a flow is more critical than another one if it has smaller estimated size. 
To avoid excessive switching among flows, the sender does not change the flow criticality for every packet it sends. Instead, the sender 
updates the flow criticality only for every $50$ KByte it sends. Figure~\ref{fig:flowEstimation} demonstrates two important results: (i) \sys does require 
reasonable estimate of flow size as random criticality can lead to large mean flow completion time in 
heavy-tailed flow size distribution. (ii) With a simple estimation scheme, \sys still compares 
favorably against RCP in both uniform and heavy-tailed flow size distributions.}


\cut{We demonstrate that \sys is resilient to inaccurate flow size 
information by adding a random error. 
For a flow with actual size $s$, to introduce inaccuracy, we set the flow size information in the scheduling header to $s'$.
We say the flow size inaccuracy is $p$ if $s'$ is uniformly drawn from [$s/(1+p)$, $s\times(1+p)$].
We fix the number of flows at $10$. 
Figure~\ref{fig:inaccurateflowsize} demonstrates that, even when the flow size information are 
highly inaccurate, the flow completion time of \sys increased only by $\sim$$9\%$.
We also experiment with random flow criticality order, which provides almost 
the same performance as when the flow size inaccuracy is $500\%$, suggesting that 
rough estimation of flow size information causes only minor performance degradation.
This demonstrates that \sys gains from scheduling flows to send one after another, rather than the specific scheduling disciplines used.
}


\begin{figure}[t]
\centering
\includegraphics[width=3.25in]{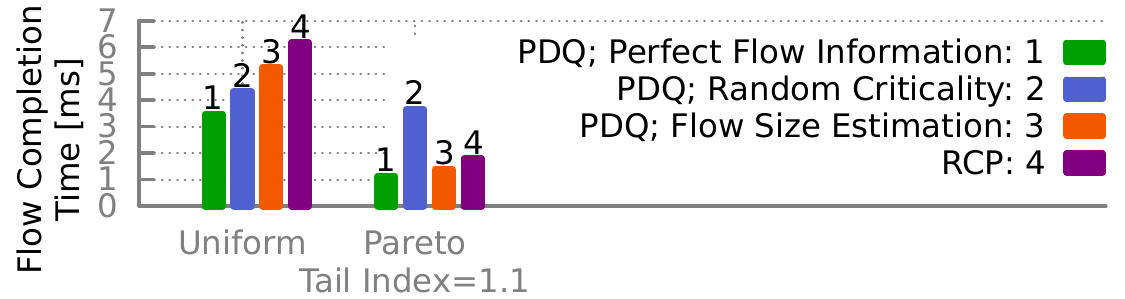} \objtolbl
\caption{\figtitle{\cready{\sys is resilient to inaccurate flow information. 
For PDQ without flow size information, the flow criticality is updated for every $50$ KByte it sends. 
Query aggregation workload, $10$ deadline-unconstrained flows with a mean size of $100$ KByte. Flow-level simulation.}}}
\label{fig:flowEstimation}
\end{figure}


\section{Multipath \sys}
\label{sec:multipath}
Several recent works~\cite{mptcp,mptcp-nsdi} \matt{show} the benefits of multipath TCP, 
ranging from improved reliability to higher network utilization.
Motivated by \matt{this work}, we propose Multipath \sys (\mpsys), \matt{which enables a single flow to}
be striped across multiple network paths.


When a flow \matt{arrives}, the \mpsys sender splits the flow into multiple subflows, and sends a SYN packet for each subflow.  
To minimize the flow completion time, the \mpsys sender periodically shifts the load from 
the paused subflows to the sending one with the minimal remaining load. 
To support \mpsys, the switch uses flow-level Equal-Cost MultiPath (ECMP) to assign subflows to paths.
The \sys switch requires no additional modification \cready{except ECMP.}
The \mpsys receiver maintains a single shared buffer for a multipath flow to resequence out-of-order packet 
arrivals, as done in Multipath TCP~\cite{mptcp-nsdi}.

We illustrate the performance gains of \mpsys using BCube~\cite{bcube}, a data center topology that allows \mpsys to exploit 
the path diversity between hosts. 
\cycomment{We implement BCube address-based routing to derive multiple parallel paths.}
Using random permutation traffic, Figure~\ref{fig:mpscp:increasingworkload:ND} demonstrates 
the impact of the system load on flow completion time of \mpsys. 
\matt{Here,} we split a flow into $3$ \mpsys subflows. 
\matt{Under light loads, \mpsys can reduce flow completion time by a factor
of two. This happens because \mpsys exploits more links that are underutilized
or idle than single-path \sys.
As load increases, these advantages are reduced, since even single-path \sys
can saturate the bandwidth of nearly all links.
}
However, as \matt{shown} in Figure~\ref{fig:mpscp:increasingworkload:ND}, \mpsys still 
retains its benefits because \mpsys allows a critical flow to have 
higher sending rate by utilizing multiple parallel paths. 
\matt{
Finally, we fix the workload at $100\%$ to stress the network (Figures~\ref{fig:mpscp:increasingsubflow:ND} and \ref{fig:mpscp:increasingsubflow}).
We observe that \mpsys needs about $4$ subflows to reach $97\%$ of its full
potential.
By allowing servers to use all four interfaces (whereas single-path \sys can use only one),
\mpsys provides a significant performance improvement.
}


\begin{figure}[tb]
\centering
\hspace{-10pt}\subfloat[]{ \label{fig:mpscp:increasingworkload:ND}   \includegraphics[width=3.0in]{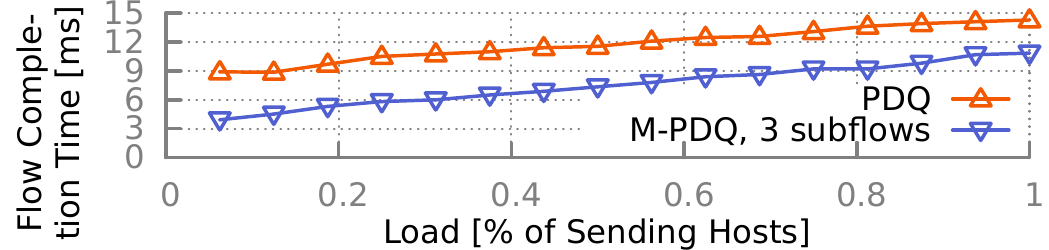}}\\\objtolbl
\subfloat[]{ \label{fig:mpscp:increasingsubflow:ND}   \includegraphics[width=1.5in]{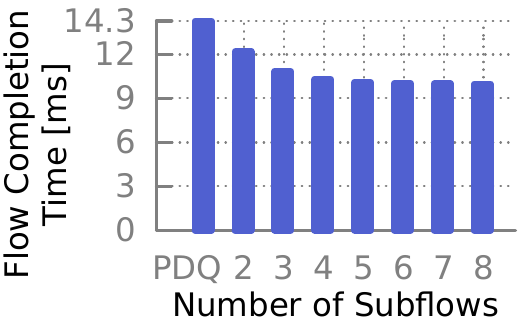}} \hspace{.1in}
\subfloat[]{ \label{fig:mpscp:increasingsubflow}   \includegraphics[width=1.5in]{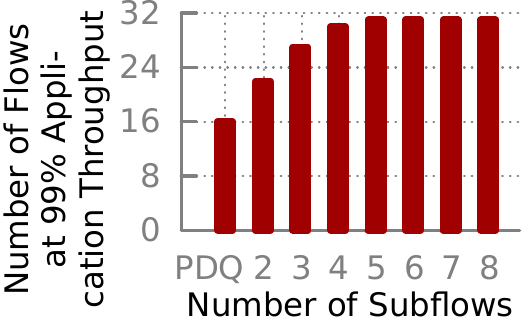}}\objtolbl
\caption{\figtitle{Multipath \sys achieves better performance. BCube(2,3) with 
random permutation traffic. (a, b) \matt{deadline-unconstrained,} (c) deadline-constrained flows.}}
\label{fig:multipath}
\end{figure}

\section{Discussion}
\label{sec:discussion}

\cut{
\paragraphe{Flow Size Estimation.} For many data center applications (e.g., web search, key-value stores, data processing), previous 
studies have shown that flow size can be precisely known at flow initiation time.\footnote{See the discussion in \S 2.1 of \cite{bntl}.}
Even for applications without such knowledge, there are good schemes to accurately estimate flow sizes, for
example by matching based on packet header fields, by applying machine learning techniques
to classify flows~\cite{flowclassifier}, or by monitoring the server-side
buffer~\cite{mahout}.  The sender can also estimate the flow size based on the
amount of data sent already, as done in~\cite{devoflow}.  We also demonstrate
that \sys preserves nearly all its performance gains even given inaccurate
flow information (\S\ref{sec:perf:badcases}).
}


\paragraphb{Fairness.} 
One could argue the performance gains of \sys over other 
protocols stem from the fact that \sys unfairly penalizes less critical flows.
Perhaps counter-intuitively, the performance gain of SJF over fair sharing does 
not usually come at the expense of long jobs.
An analysis~\cite{PShaslargeFCT} shows that at least $99\%$ of jobs 
have a smaller completion time under SJF than under fair 
sharing, and 
\matt{this percentage increases further} when the traffic load 
is less than half.\footnote{Assuming a M/G/1 queueing model 
with heavy-tailed flow distributions; see~\cite{PShaslargeFCT}.} 
Our results further demonstrate that, even in complex data center networks with thousands of concurrent flows and multiple bottlenecks, 
$85 - 95\%$ of \sys's flows have a smaller completion time than
RCP, and the worst \sys flow suffers an inflation factor of only
$2.57$ as compared with RCP
(Figure~\ref{fig:traffic:CDF}).
Moreover, unfairness might not be a primary concern in data center networks where the network is 
owned by a single entity that has full control of flow criticality. 
However, if desired, the operator can easily override the flow comparator to achieve a wide range of goals, including fairness. 
For example, to prevent starvation, the operator could gradually increase
the criticality of a flow based on its waiting time. \cycomment{Using a fat-tree topology with $256$ servers, Figure~\ref{fig:aging} demonstrates 
that this ``flow aging'' scheme is effective, reducing the worst flow completion time by $\sim$$48\%$, while the mean flow completion 
time increases only $1.7\%$.}

\begin{figure}[t]
\centering
\subfloat[]{   \includegraphics[width=3.2in]{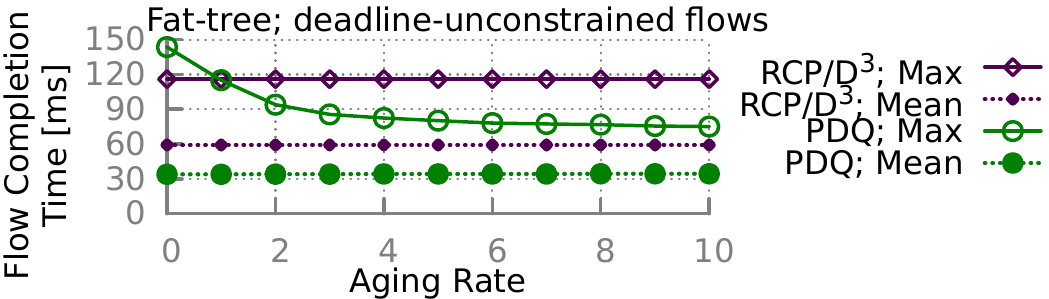}}\objtolbl
\caption{\figtitle{
\matt{Aging helps prevent less critical flows from starvation and shortens}
their completion time. 
The \sys sender increases flow criticality by reducing \retimeh by a factor of $2^{\alpha t}$, where $\alpha$ is a parameter that controls the aging 
rate, and $t$ is the flow waiting time (in terms of $100$ ms). Flow-level simulation; $128$-server fat-tree topology; random permutation traffic.
}}\label{fig:aging}
\end{figure}

\paragraphb{When flow completion time is not the priority.} 
Flow completion time is not the best metric for some protocols.
For example, real-time audio and video may require the ability to 
{\em stream}, or provide a number of flows with a fixed fraction 
of network capacity. For these applications, protocols designed for streaming transport \cut{such as DCCP, RTP, and SCTP,}may be a better fit.
\cready{One can configure the rate controller~(\S\ref{sec:algorithm:switch:rc}) to slice the network into
PDQ-traffic and non-PDQ-traffic, and use some other transport protocol for non-PDQ-traffic.}
In addition, there are also applications where the receiver
may not be able to process incoming data at the full line rate.
In such cases, sending any rate faster than what receiver can process does not offer 
substantial benefits.
Assuming the receiver buffers are reasonably small, PDQ will back off and allocate remaining bandwidth to another flow.

\paragraphb{Does preemption in PDQ require rewriting applications?}
\cready{A preempted flow is paused (briefly), not terminated. From the
application's perspective, it is equivalent to TCP being slow
momentarily; the transport connection stays open. Applications do not
need to be rewritten since preemption is hidden in the transport layer.}

\paragraphb{Incentive to game the system.} Users are rational and may have an incentive to 
improve the completion time of their own flows by splitting each flow into small flows.
While a similar issue happens to \bntl, TCP and RCP\footnote{In TCP/RCP, users may achieve higher aggregated throughput by splitting
a flow into smaller flows; in \bntl, users may request a higher rate than the flow actually needs.},
users in \sys may have an even greater incentive, since \sys does preemption.
\cycomment{It seems plausible} to penalize users for having a large number of short flows by reducing their flows' criticality.
Developing a specific scheme remains as future work.

\paragraphb{Deployment.} On end hosts, one can implement \sys by inserting a shim layer between the IP and the transport layers. 
In particular, the sender maintains 
a set of \sys variables, intercepts all calls between IP 
and transport layer, attaches and strips off the \sys scheduling header\footnote{The $16$-byte scheduling header
consists of $4$ fields, each occupying $4$ bytes: \rateh, \pausebyh, \deadlineh, and \retimeh. 
The \sys receiver adds \pfs and \rtts to the header by reusing the fields used by \deadlineh and \retimeh. This is feasible 
because \deadlineh and \retimeh are used only in the forward path, while \pfs and \rtts are used 
only in the reverse path. Any reasonable hashing that maps switch ID to $4$-byte \pausebyh should 
provide negligible collision probability.}, and passes the packet segment to IP/transport layer 
accordingly. \cready{Additionally, the shim layer could provide an API that allows applications to specify the deadline and flow size, or it could avoid the API by
estimating flow sizes (\S\ref{sec:perf:badcases}).} \matt{The} \sys sender can easily override TCP's congestion window size to control the flow 
sending rate. \cready{We note that \sys requires only a few more variables per flow on end hosts.} 
On switches, similar to previous proposals such as \bntl, a vendor can implement \sys by making 
modifications to the switch's hardware and software. Per-packet operations like modifying header fields 
are already implemented on most vendors' hardware (e.g., ASICs), which can be directly used
by our design. The more complex operations like computing the aggregated 
flow rate and sorting/updating the flow list can be implemented in software. 
\cready{We note that \sys's per-packet running time is $O(\kappa)$ for the top $\kappa$ flows and $O(1)$ for the rest of the flows, where $\kappa$ is a small number of flows with the highest criticality and can be 
bounded as in \S\ref{sec:algorithm:switch:s}}. 
The majority of the sending flows' scheduling headers would remain 
unmodified\footnote{Until, of course, the flow is preempted or terminated.} by switches.

\cut{To improve deployability, we would like to avoid requiring vendors to make
hardware-level changes to their existing switches.  To support this, we
describe how \sys can be deployed through software-only modifications if the
vendor's equipment already supports OpenFlow.  OpenFlow is a technology that
exposes a set of APIs to allow fine-grained control over network forwarding
behavior.  To implement OpenFlow within a network device, the vendor extends
the hardware to provide customizable control on packet forwarding behavior, and
an interface to allow switch software to control this functionality.  The
vendor also implements the OpenFlow protocol, to enable a centralized
controller to control the switch's functionality.  \sys makes use of the
former, but not the latter.  In particular, \sys can be implemented as a set
of extensions to the software running at a switch, that exploits the local
OpenFlow hardware interfaces developed by the vendor.  \sys's hardware
operations (flow rate sampling and modifying header fields) can then be done by
directly manipulating OpenFlow rules (using matching and counters for flow rate
sampling, and actions for modifying header fields). As before, the remainder of
\sys's functionality can be implemented to the software running at the switch.}

\cut{Commercial switches can be thought of as specialized computers.  Like
traditional computers, they often run {\em software}: an operating system and
collection of protocol daemons that run atop a microprocessor.  The software of
a switch is used to implement routing protocol logic, configuration and vty
functionality, etc.  To support high-speed packet forwarding, switches often
also have customized {\em hardware}, often in the form of a customized ASIC.
This hardware support queuing, destination port lookup, filtering, matching,
and other operations that must be supported at high speed.  The switch's
software can control the hardware's operations through an interface, for
example, by writing into memory banks that are read directly by the switch's
hardware.}

\cut{In previous sections (Sections~\ref{xx-xx}, we described the switch-level
operational requirements of \sys.  \sys requires the ability to periodically
sample flow rates, to maintain a flow list data structure, to perform a
computation on the flow list to derive new flow rates, and to modify fields in
packet headers.  A vendor can implement \sys in a switch by making
modifications to the switch's hardware and software.  For example, operations
that must be done per-packet at high speeds (flow rate sampling and modifying
header fields)  can be implemented by modifying the switch's ASIC.  More
complex operations that do not need to be done in real time can be implemented
in software (computing the current set of flow rates for header fields, and
maintaining the flow list). \sys's hardware operations are limited to flow rate sampling and
modifying header fields, which can be done with straightforward modifications
to device hardware.}

\vspace{-5pt}
\section{Related Work}
\label{sec:relatedwork}


\cut{\paragraphb{Fair Sharing:} TCP adopts additive increase and multiplicative 
decrease (AIMD) to converge to fair share allocation of network bandwidth among flows~\cite{aimd}.
There is a rich literature on improving TCP performance. 
By explicitly assigning a single fair-share rate at switches, RCP~\cite{rcp}
\matt{reduces} convergence time to reach fair-share rate than TCP\cut{ and XCP}.
\matt{DCTCP~\cite{dctcp} and ICTCP~\cite{ictcp} improve}
TCP burst tolerance in data center networks.
DCTCP modifies ECN (Explicit Congestion Notification) to alleviate 
TCP impairments in data center networks (e.g., TCP Incast), while
ICTCP targets at almost the same problem with an alternative approach: 
\matt{a} receiver-based congestion control that requires no modification on routers.
However, all of these protocols emulate fair sharing, which leads to suboptimal flow completion time.
}


\paragraphb{\bntl:} 
While \bntl~\cite{bntl} is a deadline-aware protocol that also employs explicit rate control like \sys, it neither resequences flow 
transmission order nor preempts flows, resulting in a substantially different flow schedule which serves flows according to the order of their arrival.
Unfortunately, this allows flows with large deadlines to hog the bottleneck bandwidth, blocking short flows that arrived later.

\paragraphb{Fair Sharing:} TCP, RCP~\cite{rcp} and DCTCP~\cite{dctcp}  all emulate fair sharing, which leads to suboptimal flow completion time. 

\paragraphb{TCP/RCP with Priority Queueing:} One could use priority queuing at switches and assigning different priority levels
to flows based on their deadlines. Previous studies~\cite{bntl} showed that, using 
two-level priorities, TCP/RCP with priority queueing suffers from losses and falls behind \bntl, and
increasing the priority classes to four does not significantly improve performance.
This is because flows can have very different deadlines and require a large number of priority classes, while
switches nowadays provide only a small number of classes, mostly no more than ten.


\paragraphb{ATM:} \cready{One could use ATM to achieve QoS priority control. However, ATM's CLP classifies traffic into only two priority levels, while PDQ gives each flow a unique priority. Moreover, ATM is unable to preempt flows (i.e., new flows cannot affect existing ones). }

\paragraphb{DeTail:} In a \matt{recent} (Oct 2011) technical report, Zats et al. propose DeTail~\cite{detail}, an in-network multipath-aware congestion management mechanism that reduces
the flow completion time ``tail'' in datacenter networks. 
However, it targets neither mean flow completion time nor the number of deadline-missing flows. 
Unlike DeTail which removes the tail, \sys can save $\sim$$30\%$ flow completion time on average (compared with TCP and RCP), reducing the completion 
time of almost every flow (e.g., $85\% - 95\%$ of the flows, Figure~\ref{fig:traffic:CDF}). 
We have not attempted a direct comparison due to the very different focus and the recency of this work.

\section{Conclusion}
\label{sec:conclusion}


We proposed \sys, a flow scheduling protocol designed to complete flows quickly
and meet flow deadlines. \sys provides a distributed algorithm to approximate a range of scheduling
disciplines based on relative priority of flows, minimizing mean flow completion time and the number of deadline-missing flows.
We perform extensive packet-level and flow-level simulation of \sys and several
related works, leveraging real datacenter workloads and a variety of traffic
patterns, network topologies, and network sizes.  We find that \sys provides
significant advantages over existing schemes.  In particular, \sys can reduce
by $\sim$$30\%$ the average flow completion time as compared with TCP, RCP and
D$^3$; and can support $3\times$ as many concurrent senders as D$^3$ while
meeting flow deadlines.  We also design a multipath variant of \sys by
splitting a single flow into multiple subflows, and demonstrate that \mpsys
achieves further performance and reliability gains under a variety of settings.

{
\vspace{-5pt}
\bibliography{paper}
}
\bibliographystyle{abbrv}

\section*{Appendix A. Deadlock-freedom}
Deadlock is a situation where two or more competing flows are paused and are each waiting for the other to
finish, and therefore neither ever does. We verify \sys has no deadlock by showing that \emph{hold and wait}, a necessary condition of
deadlock, is false. Hold and wait is a situation that a flow is accepted by some intermediate switches, while paused by other 
switches along the routing path.
In \sys, a flow is accepted \emph{only} after \emph{every} switch along the path accepts the flow.
Moreover, if a \sys flow is paused, the switch that pauses this
flow will update the pauseby field in the scheduling header (\pausebyh) to its ID.
Hence, after some time goes by, this information will reach all the other switches along the path, as
even the paused flows would send probes periodically.
Whenever a switch notices that a flow is paused by another switch, it will not consider accepting this flow.
Thus, a paused flow will not be accepted by \emph{any} switch along the path.

\section*{Appendix B. Bounding the Convergence Time}
\label{sec:appendix}

\paragraphb{Assumptions:} Without loss of generality, we assume there is no
packet loss. Similarly, we assume flows will not be paused due to the use of flow dampening.
Because \sys flows periodically send probes, the properties we discuss in this section will hold with additional latency when
the above assumptions are violated. For simplicity, we also assume the link rate $C$ is equal to
the maximal sending rate \maxrates (i.e., \schrates $=$ $0$ or $C$). Thus, each link accepts only one flow at a time.

\paragraphb{Definitions:}
We say a flow is \emph{competing} with another flow if and only if they share at least one common link.
Moreover, we say a flow $F_1$ is a \emph{precedential} flow of flow $F_2$ if and only if they are
competing with each other and flow $F_1$ is more critical than flow $F_2$.
We say a flow $F$ is a \emph{driver} if and only if (i) flow $F$ is more critical than any other competing flow, or (ii) all the
competing flows of flow $F$ that are more critical than flow $F$ are non-drivers.

\begin{lemma}
\label{theorem:accept}
\label{theorem:pause}
\textnormal{
When all the precedential flows of a flow $F$ are paused (or it has none) and workload is stable (no new flows arrive and no sending flow finishes), flow $F$ will 
be accepted in at most one RTT. If any precedential flow of a flow $F$ is accepted, flow $F$ will be paused in at most one RTT.
}
\end{lemma}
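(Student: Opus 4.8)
Both statements are ``one-step'' claims about the flow controller, so the plan is to trace a single packet of $F$ forward along its path $P_F$ and back along the reverse path, using the Appendix~B assumption that each link admits exactly one flow, so that an accepted flow is recorded at a switch with the full link rate $C$ and a paused flow with rate $0$. I would first isolate a small bookkeeping fact: within one RTT, every switch on the path of a paused flow stores rate $0$ for that flow. A flow paused by \emph{another} switch is deleted from a switch's list the moment a packet carrying that pauseby stamp passes through (first clause of Algorithm~\ref{alg:data}, and Algorithm~\ref{alg:ack}); a flow paused by a switch \emph{itself} has its stored rate set to $0$ when it is paused. This is essentially the propagation argument already used for deadlock-freedom in Appendix~A.

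\emph{Acceptance.} Let $F$ emit its next packet --- a probe, if it is currently paused --- and follow it forward along $P_F$ to the receiver and the returning ACK back to the sender; this takes one RTT. The key claim is that every switch $S$ on $P_F$ takes the ``Accept it'' branch of Algorithm~\ref{alg:data} when it processes this packet. Any flow $G$ that $S$ holds in its list for the relevant link uses that link and hence competes with $F$; if moreover $G$ is more critical than $F$, then $G$ is a precedential flow of $F$, so by hypothesis it is paused and, by the bookkeeping fact, its stored rate is $0$. Hence, when $\mathrm{Availbw}$ (Algorithm~\ref{alg:delta}) is evaluated at $F$'s position, the accumulator is still $0$ and the routine returns $C > 0$; since the header also carries $\mathcal{R}_H = \mathcal{R}_S^{\max} > 0$, we get $W = \min(\mathrm{Availbw}, \mathcal{R}_H) > 0$, and $S$ clears the pauseby field and assigns a positive rate. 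One also checks --- using the list-admission rule and the fact that $F$ is at least as critical as any flow actually sending on that link --- that $F$ occupies, or is at once inserted as, an entry of each switch's list. With every switch on $P_F$ accepting, the reverse-pass rule commits the global acceptance to each switch's state, and the returning ACK delivers a positive rate to the sender, all within the one RTT.

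\emph{Pausing.} Suppose a precedential flow $G$ of $F$ is accepted, and let $S$ lie on a link shared by $F$ and $G$. Because $G$ is accepted and precedes $F$ in the criticality order, $S$'s state records rate $C$ for $G$, so the accumulator in Algorithm~\ref{alg:delta} reaches $C$ before the loop reaches $F$ and the routine returns $0$. Then $W = 0$, Algorithm~\ref{alg:data} takes the ``Pause it'' branch and stamps $\mathcal{P}_H = \mathrm{myID}$ on $F$'s next packet; within one RTT this marker reaches the receiver and returns, and the sender sets its rate to $0$ (Algorithm~\ref{alg:ack}).

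The bandwidth arithmetic above is routine; the delicate part is the timing, and that is where I expect the real work to lie. The main obstacle is justifying that a single RTT genuinely suffices --- that by the time $F$'s packet is processed, each switch's stored rates already reflect the ``paused''/``accepted'' status of the relevant precedential flows, and that $F$ has not been displaced from a switch's size-$2\kappa$ list by higher-criticality paused entries. I expect to settle this with the bookkeeping fact above, together with the observation that under the Appendix~B assumptions a paused flow still probes often enough for its status to reach every switch on its path within an RTT; Suppressed Probing, which those assumptions exclude, is precisely what would otherwise add latency, as the assumptions themselves note.
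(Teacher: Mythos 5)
Your proof is correct and is, in substance, the paper's own argument: the paper disposes of this lemma with the single sentence ``This property follows directly from the \sys flow controller algorithm,'' and your packet-tracing walkthrough (paused precedential flows contribute $0$ to \emph{Availbw} so $W>0$ and every switch accepts; an accepted precedential flow drives the accumulator to $C$ so $W=0$ and the pauseby stamp propagates in one RTT) is exactly the verification that sentence leaves implicit. The bookkeeping fact you isolate is likewise the same propagation argument the paper uses in Appendix~A, so there is no divergence in approach --- only in the level of detail supplied.
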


\begin{proof}
This property follows directly from the \sys flow controller algorithm.
\end{proof}

\begin{lemma}
\textnormal{
\sys will converge to the equilibrium in $P_{\max}+1$ RTTs for stable
workloads, where $P_{\max}$ is the maximal number of precedential flows of any flow.
Given a collection of active flows, the equilibrium is defined as a state where all the drivers are accepted while the remaining flows are paused.
}
\end{lemma}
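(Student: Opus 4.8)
The plan is to prove the bound by strong induction on the criticality rank $m$ of a flow $F$ in the (fixed, stable) workload, showing the stronger statement: if $F$ is the $m$-th most critical flow in the network, then within $P+1$ RTTs $F$ is accepted if it is a driver and paused otherwise, where $P$ is the number of precedential flows of $F$. Since $P \le P_{\max}$ for every flow and the equilibrium is exactly the state in which all drivers are accepted and all non-drivers are paused, establishing this for every flow simultaneously gives convergence within $P_{\max}+1$ RTTs.

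For the base case $m=1$: the most critical flow in the network has no precedential flow at all (no competing flow is more critical), so by clause (i) of the definition it is a driver, and by Lemma~\ref{theorem:accept} it is accepted within one RTT $= P+1$ RTTs with $P=0$. For the inductive step, suppose the claim holds for all flows of rank $\le n$, and let $F$ have rank $n+1$. Let $F_1,\dots,F_n$ be the flows more critical than $F$, of which say $n' \le n$ are precedential (i.e., actually competing with $F$); note each such precedential flow has rank $\le n$ and hence at most $n-1$ precedential flows of its own. Split into two cases. If $F$ is a driver, then by clause (ii) every one of its $n'$ precedential flows is a non-driver, so by the induction hypothesis each is paused within $(n-1)+1 = n$ RTTs; once all of $F$'s precedential flows are paused, Lemma~\ref{theorem:accept} accepts $F$ within one more RTT, i.e.\ within $n+1 = P+1$ RTTs (using $P = n'\le n$... and more precisely $P+1$). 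If $F$ is not a driver, then by clause (ii) at least one precedential flow $F_j$ of $F$ is a driver; $F_j$ has rank $\le n$ and at most $n-1$ precedential flows, so by the induction hypothesis $F_j$ is accepted within $(n-1)+1 = n$ RTTs, and then by the second half of Lemma~\ref{theorem:accept} flow $F$ is paused within one further RTT, i.e.\ within $n+1$ RTTs.

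The main subtlety — and the step I expect to need the most care — is making the induction \emph{well-founded and non-circular} given that ``driver'' is defined via mutual reference among competing flows. The key observation that breaks the apparent circularity is that the definition of driver for a rank-$m$ flow only refers to \emph{more critical} competing flows, i.e.\ flows of strictly smaller rank; so ``driver'' is in fact determined by a top-down pass over the criticality order and is well-defined, and the induction on rank respects this dependency. A second point to nail down is that the RTT counts compose additively: after the precedential flows of $F$ have stabilized (all paused, in the driver case), the workload seen by $F$'s switches is unchanged thereafter, so Lemma~\ref{theorem:accept}'s ``within one RTT'' applies starting from that moment, giving the clean $(\text{time to stabilize predecessors}) + 1$ bound. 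One should also remark that $P' \le n-1$ for the precedential flows suffices — we do not need the exact $P$ of each, only the crude rank bound — which is what keeps the arithmetic at $n+1$. Finally, I would note explicitly that stability of the workload (no arrivals, no departures) is used in two places: to keep each flow's set of precedential flows and driver status fixed over the whole interval, and to invoke Lemma~\ref{theorem:accept} (whose own statement assumes stability).
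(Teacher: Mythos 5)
Your proposal is correct and follows essentially the same route as the paper's Appendix B proof: the same strengthened inductive claim on criticality rank, the same base case, and the same two-case inductive step invoking Lemma~\ref{theorem:accept} after the more critical flows stabilize. The only difference is that you make explicit two points the paper leaves implicit (the well-foundedness of the ``driver'' definition and the additive composition of the RTT bounds), and you even flag the same arithmetic looseness the paper has in passing from the rank bound $n+1$ to the stated bound $P+1$.
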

\begin{proof}
We show that when the workload is stable (no new flows arrive and no sending flow finishes), a flow will
be accepted if it is a driver and will otherwise be paused in $P + 1$ RTTs, where $P$ is the number of its precedential flows.
We prove this will hold for any flow $F$ that is the $m$-th critical flow in the network by induction on $m$.
When $m=1$, flow $F$ is a driver by definition. 
Thus, it will be accepted in one RTT according to Lemma~\ref{theorem:accept}.
When $m=n+1$, there exist $n$ flows $F_1 \cdots F_n$ that are more critical than flow $F$.
Without loss of generality, out of these $n$ flows, we assume there are $n' \le n$ precedential flows (as they are competing with flow $F$).
Suppose that flow $F$ is a driver. Then, all these $n'$ flows are non-drivers by definition.
By the induction hypothesis, these $n'$ competing flows will all be paused in $P'+1$ RTTs, where $P'$ is the maximal possible number of precedential flows
of these $n'$ flows. As each of these $n'$ flows will have at most $n-1$ precedential flows, we have $P' \le n-1$.
After these flows are paused, it takes at most an RTT for switches to accept flow $F$ according to
Lemma~\ref{theorem:accept}, and therefore the flow $F$ will be
accepted in $(P'+1)+1 \le n+1$ RTTs.
Suppose now that flow $F$ is not a driver. According to the definition of driver, among those $n'$ competing flows, there exists $>$$0$ drivers.
Similarly, among these drivers, the maximal number of precedential flows of each driver is at most $n-1$.
By the induction hypothesis, these drivers will be accepted in at most $n$ RTTs, and
after this, flow $F$ will be paused in one RTT according to Lemma~\ref{theorem:pause}.
\end{proof}

\end{document}